\newcommand{\pder}[2]{\frac{\partial #1}{\partial #2}}
\newcommand{\der}[3][]{\frac{d^{#1} #2}{d #3^{#1}}}
\renewcommand{\R}{\mathbb{R}}
\newcommand{\comment}[1]{}
\newtheorem{theorem}{Theorem}
\newtheorem{lemma}{Lemma}
\begin{document}
\title[Split HMC revisited]{Split Hamiltonian Monte Carlo revisited}

\author[1]{\fnm{Fernando} \sur{Casas}}\email{casas@uji.es}
\author[2]{\fnm{Jes{\'u}s Mar{\'\i}a} \sur{Sanz-Serna}}\email{jmsanzserna@gmail.com}
\author*[1]{\fnm{Luke} \sur{Shaw}}\email{shaw@uji.es}

\affil*[1]{\orgdiv{Departament de Matem{\`a}tiques and IMAC}, \orgname{Universitat Jaume I}, \postcode{E-12071}, \orgaddress{\city{Castell{\'o}n}, \country{Spain}}}

\affil[2]{\orgdiv{Departamento de Matem{\'a}ticas}, \orgname{Universidad Carlos III de Madrid}, \postcode{E-28911}, \orgaddress{\city{Legan{\'e}s}, \country{Spain}}}

\abstract{
We study Hamiltonian Monte Carlo (HMC) samplers based on splitting the Hamiltonian $H$ as $H_0(\theta,p)+U_1(\theta)$, where $H_0$ is quadratic and $U_1$ small. We show that, in general, such samplers suffer from  stepsize stability restrictions similar to those of algorithms based on the standard leapfrog integrator. The restrictions may be circumvented by preconditioning the dynamics. Numerical experiments show that,  when the $H_0(\theta,p)+U_1(\theta)$ splitting is combined with preconditioning, it is possible to construct samplers far more efficient than  standard leapfrog HMC.
}

\keywords{Markov chain Monte Carlo, Hamiltonian dynamics, Bayesian analysis, Splitting integrators}

\maketitle

\bmhead{Acknowledgments}
This work has been supported by 
Ministerio de Ciencia e Innovaci\'on (Spain) through project PID2019-104927GB-C21, 
MCIN/AEI/10.13039/501100011033, ERDF (``A way of making Europe”).

\section{Introduction}
In this paper we study Hamiltonian Monte Carlo (HMC) algorithms \citep{Neal2011} that are not based on the standard kinetic/potential splitting of the Hamiltonian.

The computational cost of HMC samplers mostly originates from the numerical integrations that have to be performed to get the proposals. If the target distribution has density proportional to $ \exp(-U(\theta))$, $\theta\in\R^d$, the differential system to be integrated is given by the Hamilton's equations corresponding to the Hamiltonian function $H(\theta,p) = (1/2)p^TM^{-1}p+U(\theta)$, where  $p\sim\mathcal{N}(0,M)$ is the auxiliary momentum variable and  $M$ is the symmetric, positive definite \emph{mass matrix} chosen by the user. In a mechanical analogy,
$H$ is the (total) energy, while $\mathcal{T}(p)=(1/2)p^TM^{-1}p$ and $U(\theta)$ are respectively the kinetic and potential energies. The St\"{o}rmer/leapfrog/Verlet integrator is the method of choice to carry out those integrations and is based on the idea of \emph{splitting} \citep{Blanes2017Book}, i.e.\ the evolution of $(\theta,p)$ under $H$ is simulated by the separate evolutions under $\mathcal{T}(p)$ and $U(\theta)$ (kinetic/potential splitting). However $H(\theta,p) = \mathcal{T}(p)+U(\theta)$ is not the only splitting that has been considered in the literature. In some applications one may write $H(\theta,p) = H_0(\theta,p)+U_1(\theta)$, with $H_0(\theta,p) = \mathcal{T}(p)+U_0(\theta)$, $U(\theta) = U_0(\theta)+U_1(\theta)$ and replace the evolution under $H$ by the evolutions under $H_0$ and $U_1$ \citep{Neal2011}. The paper \citep{Shahbaba2014} investigated this possibility; two algorithms were formulated referred to there as \lq\lq Leapfrog with a partial analytic solution\rq\rq\ and \lq\lq Nested leapfrog\rq\rq. Both suggested algorithms were shown to outperform, in four logistic regression problems,  HMC based on the standard leapfrog integrator.

In this article we reexamine $H(\theta,p) = H_0(\theta,p)+U_1(\theta)$ splittings, in particular in the case where the equations for $H_0$ can be integrated analytically (partial analytic solution) because $U_0(\theta)$ is a quadratic function (so that $\propto \exp(-U_0(\theta))$ is a Gaussian distribution). When $U_1$ is slowly varying, the splitting $H=H_0+U_1$ is appealing because, to quote \citep{Shahbaba2014}, \lq\lq only the slowly-varying part of the energy needs to be handled numerically and this can be done with a larger stepsize (and hence fewer steps) than would be necessary for a direct simulation of the dynamics\rq\rq.

Our contributions are as follows:
\begin{enumerate}
\item In Section~\ref{sec:shortcomings} we show, by means of a counterexample, that it is not necessarily true that, when $H_0$ is handled analytically and $U_1$ is small, the integration may be carried out with stepsizes substantially larger than those required by standard leapfrog.  For integrators based on the $H_0+U_1$ splitting, the stepsize may suffer from important stability restrictions, regardless of the size of $U_1$.
\item In Section~\ref{sec:preconditioning} we show that, by combining the $H_0+U_1$ splitting with the idea of \emph{preconditioning} the dynamics, that goes back at least to \citep{Bennett1975}, it is possible to bypass the stepsize limitations mentioned in the preceding item.
\item We present an integrator (that we call RKR) for the $H_0+U_1$ splitting that provides an alternative to the integrator tested in \citep{Shahbaba2014} (that we call KRK).
\item Numerical experiments in the final Section~\ref{sec:numerical}, using  the test problems in \citep{Shahbaba2014}, show that the advantages of moving from standard leapfrog HMC to the $H_0+U_1$ splitting (without preconditioning) are much smaller than the advantages of using preconditioning while keeping the standard kinetic/potential splitting. The best performance is obtained when the $H_0+U_1$ splitting is combined with the preconditioning of the dynamics. In particular the RKR integration technique with preconditioning decreases the computational cost by more than an order of magnitude in all test problems and all observables considered.
\end{enumerate}

There are two appendices. In the first, we illustrate the use of the Bernstein-von Mises theorem
(see e.g.\ section 10.2 in \citep{VanderVaart2000}) to justify the soundness of the $H_0+U_1$ splitting. The second is devoted to presenting a methodology to discriminate between different integrators of the preconditioned dynamics for the $H_0+U_1$ splitting; in particular we provide analyses that support the advantages of the RKR technique over its KRK counterpart observed in the experiments.
\section{Preliminaries}
\subsection{Hamiltonian Monte Carlo}
HMC is based on the observation that \citep{Neal2011,SanzSerna2014}, for each fixed $T>0$, the exact solution map (flow) $(\theta(T),p(T))=\varphi_T(\theta(0),p(0))$ of the Hamiltonian system of differential equations in $\R^{2d}$
\begin{equation}\label{eq:PrelimHamEqs}
\der{\theta}{t}=\pder{H}{p}=M^{-1}p,\quad\quad\der{p}{t}=-\pder{H}{\theta}=-\nabla U(\theta),
\end{equation}
exactly preserves the density
$
\propto\exp(-H(\theta,p)) =\exp(-\mathcal{T}(p)-U(\theta))
$
whose $\theta$-marginal is the target $\propto \exp(-U(\theta))$, $\theta\in\R^d$. In HMC, \eqref{eq:PrelimHamEqs} is integrated numerically over an interval $0\leq t\leq T$ taking as initial condition the current state $(\theta,p)$ of the Markov chain; the numerical solution at $t=T$ provides the proposal $(\theta^\prime,p^\prime)$ that is accepted with probability
\begin{equation}\label{eq:PrelimAP}
a=\min\left\{1,e^{-\big(H(\theta^\prime,p^\prime)-H(\theta,p)\big)}\right\}.
\end{equation}
This formula for the acceptance probability assumes that the numerical integration has been carried out with an integrator that is both \emph{symplectic} (or at least volume preserving) and \emph{reversible}. The difference $H(\theta^\prime,p^\prime)-H(\theta,p)$ in \eqref{eq:PrelimAP} is the energy error in the integration; it would vanish leading to $a=1$ if the integration were exact.
\subsection{Splitting}
Splitting is the most common approach to derive symplectic integrators for Hamiltonian systems \citep{Blanes2017Book,SS2018Book}. The Hamiltonian $H$ of the problem is decomposed in partial Hamiltonians as
$H=H_1+H_2$ in such a way that the Hamiltonian systems with Hamiltonian functions $H_1$ and $H_2$ may both be integrated in closed form. When Strang splitting is used, if $\varphi^{[H_1]}_t,\varphi^{[H_2]}_t$ denote the maps (flows) in $\R^{2d}$ that advance the exact solution of the partial Hamiltonians over a time-interval of length $t$,  the recipe
\begin{equation}\label{eq:VerletStrang}
\psi_{\epsilon}=\varphi^{[H_1]}_{\epsilon/2}\circ\varphi^{[H_2]}_{\epsilon}\circ\varphi^{[H_1]}_{\epsilon/2},
\end{equation}
defines the map that advances the numerical solution a timestep of length $\epsilon>0$. The numerical integration to get a proposal may then be carried out up to time $T=\epsilon L$ with the $L$-fold composition $\Psi_T=\left(\psi_{\epsilon}\right)^L$. Regardless of the choice of $H_1$ and $H_2$, \eqref{eq:VerletStrang} is a symplectic, time reversible integrator of second order of accuracy \citep{BouRabee2018}.
\subsection{Kinetic/potential splitting}
The splitting $H = H_1+H_2$, $H_1= \mathcal{T}$, $H_2=U$ gives rise, via \eqref{eq:VerletStrang}, to the commonest integrator in HMC: the St\"{o}rmer/leapfrog/velocity Verlet algorithm. The differential equations for the partial Hamiltonians $\mathcal{T}$, $U$ and the corresponding solution flows are
\begin{align*}
\der{}{t}\begin{pmatrix}\theta\\p\end{pmatrix}&=\begin{pmatrix}0\\-\nabla U(\theta)\end{pmatrix}\implies \varphi_{\epsilon}^{[U]}(\theta,p)=(\theta,p-\epsilon\nabla U(\theta)),\\
\der{}{t}\begin{pmatrix}\theta\\p\end{pmatrix}&=\begin{pmatrix}M^{-1}p\\0\end{pmatrix}\implies \varphi_{\epsilon}^{[\mathcal{T}]}(\theta,p)=(\theta+\epsilon M^{-1}p,p).
\end{align*}
As a mnemonic, we shall use the word \emph{kick} to refer to the map $\varphi_{\epsilon}^{[U]}(\theta,p)$
(the system is kicked so that the momentum $p$ varies without changing $\theta$). The word \emph{drift}
will refer to the map $\varphi_{\epsilon}^{[\mathcal{T}]}(\theta,p)$ ($\theta$ drifts with constant velocity). Thus one timestep of the velocity Verlet algorithm reads (kick-drift-kick).
$$
\psi_{\epsilon}^{[KDK]}=\varphi_{\epsilon/2}^{[U]}\circ\varphi^{[\mathcal{T}]}_{\epsilon}\circ\varphi_{\epsilon/2}^{[U]}.
$$
There is of course a position Verlet algorithm obtained by interchanging the roles of $\mathcal{T}$ and $U$. One timestep is given by  a sequence drift-kick-drift (DKD). Generally the velocity Verlet (KDK) version is preferred (see \citep{BouRabee2018} for a discussion) and we shall not be concerned hereafter with the position variant.

With any integrator of the Hamiltonian equations, the length $\epsilon L=T$ of the time interval for the integration to get a proposal has to be determined to ensure that the proposal is sufficiently far from the current step of the Markov chain, so that the correlation between successive samples is not too high and the phase space is well explored \citep{Hoffman2014,BouRabee2017}. For fixed $T$, smaller stepsizes $\epsilon$ lead to fewer rejections but also to larger computational cost per integration and it is known that HMC is most efficient when the empirical acceptance rate is around approximately $65\%$ \citep{Beskos2013Optimal}.

Algorithm \ref{alg:Verlet} describes the computation to advance a single step of the Markov chain with HMC based on the velocity Verlet (KDK) integrator. In the absence of additional information, it is standard practice to choose $M = I$, the identity matrix. For later reference, we draw attention to the randomization of the timestep $\epsilon$. As is well known, without such a randomization, HMC may not be ergodic \citep{Neal2011}; this will happen for instance when the equations of motion \eqref{eq:PrelimHamEqs} have periodic solutions and $\epsilon L$ coincides with the period of the solution.

\begin{algorithm}[h]
\caption{KDK Verlet}
\textbf{Input}: $\theta, U,H,M,\bar{\epsilon},L$
\label{alg:Verlet}
  \begin{algorithmic}[1]
\State Draw $\xi\sim\mathcal{N}(0,M),\epsilon\sim\bar{\epsilon}\times\mathcal{U}_{[0.8,1]}$ \Comment{Randomise $\epsilon$}
  \State $\theta',p\gets \theta,\xi$ \Comment{Refresh momentum}
  \For{$i=\{1,\ldots,L\}$} \Comment{Do velocity Verlet integration}
   \State $p\gets p-\frac{\epsilon}{2}\nabla U(\theta')$
    \State $\theta'\gets \theta'+\epsilon M^{-1}p$
   \State $p\gets p-\frac{\epsilon}{2}\nabla U(\theta')$
  \EndFor
  \State $a\gets\min\left\{1,\exp(H(\theta,\xi)-H(\theta',p))\right\}$
\State Draw $\gamma\sim\mathcal{B}(a)$ \Comment{$\gamma$ Bernoulli-distributed with mean $a$}
\State $\theta\gets \gamma\theta'+(1-\gamma)\theta$ \Comment{Accept proposal with probability $a$}
\end{algorithmic}
\end{algorithm}
\subsection{Alternative splittings of the Hamiltonian}\label{sec:splitting}
Splitting $H(\theta,p)$ in its kinetic and potential parts as in Verlet is not the only meaningful possibility. In many applications, $U(\theta)$ may be written as $U_0(\theta)+U_1(\theta)$ in such a way that the equations of motion for the Hamiltonian function $H_0(\theta,p)=(1/2)p^TM^{-1}p+U_0(\theta)$ may be integrated in closed form and then one may split $H$ as
 \begin{equation}\label{eq:main}
 H = H_0+U_1,\qquad
 \end{equation}
 as discussed in e.g.\ \citep{Neal2011,Shahbaba2014}.

In this paper we focus on the important particular case where (see Section~\ref{sec:numerical} and Appendix A)
\begin{equation}\label{eq:U0}
U_0(\theta) = \frac{1}{2} (\theta-\theta^*)^T \mathcal{J} (\theta-\theta^*),
\end{equation}
for some fixed $\theta^*\in\R^d$ and a constant symmetric, positive definite matrix $\mathcal{J}$. Restricting for the time being  attention to the case where the mass matrix $M$ is the identity (the only situation considered in \citep{Shahbaba2014}), the equations of motion and solution flow for the Hamiltonian
\begin{equation}\label{eq:H0}
H_0(\theta,p)=\frac{1}{2}p^Tp+U_0(\theta)
\end{equation}
are
\begin{equation}\label{eq:IntroUncondSplitHamEqs}
\der{}{t}\begin{pmatrix}\theta\\p\end{pmatrix}=
\begin{pmatrix}0 & I\\ -\mathcal{J} & 0\end{pmatrix}\begin{pmatrix}\theta-\theta^*\\p\end{pmatrix},
\qquad  \varphi_t^{[H_0]}(\theta,p)=\exp\left(t\begin{pmatrix}0 & I\\ -\mathcal{J} & 0\end{pmatrix}\right)\begin{pmatrix}\theta-\theta^*\\p\end{pmatrix}+\begin{pmatrix}\theta^*\\0\end{pmatrix}.
\end{equation}
If we write $\mathcal{J} = Z^TDZ$, with $Z$ orthogonal and $D$ diagonal with positive diagonal elements, then the exponential map in \cref{eq:IntroUncondSplitHamEqs} is
\begin{equation}\label{eq:IntroExpMap}
\exp\left(t\begin{pmatrix}0 & I\\ -\mathcal{J} & 0\end{pmatrix}\right)=\begin{pmatrix}Z^T & 0\\ 0 & Z^T\end{pmatrix}e^{t\Lambda}\begin{pmatrix}Z & 0\\ 0 & Z\end{pmatrix},\qquad e^{t\Lambda }=\begin{pmatrix}\cos(t\sqrt{D}) & D^{-1/2}\sin(t\sqrt{D}) \\ -D^{1/2}\sin(t\sqrt{D})  & \cos(t\sqrt{D}) \end{pmatrix}.
\end{equation}
In view of the expression for $\exp(t\Lambda)$, we will refer to the flow of $H_0$ as a \emph{rotation}.

Choosing in \eqref{eq:VerletStrang} $U_1$ and $H_0$ for the roles of $H_1$ and $H_2$  (or viceversa) gives rise to the integrators
\begin{equation}\label{eq:IntroOscillationIntegrators}
\psi_{\epsilon}^{[KRK]}=\varphi_{\epsilon/2}^{[U_1]}\circ \varphi_{\epsilon}^{[H_0]}\circ \varphi_{\epsilon/2}^{[U_1]},\quad\quad\psi_{\epsilon}^{[RKR]}= \varphi_{\epsilon/2}^{[H_0]}\circ \varphi_{\epsilon}^{[U_1]}\circ \varphi_{\epsilon/2}^{[H_0]},
\end{equation}
where  one advances the solution over a single timestep by using a kick-rotate-kick (KRK) or rotate-kick-rotate (RKR) pattern (of course the kicks are based on the potential function $U_1$).
 The HMC algorithm with the KRK  map in  \eqref{eq:IntroOscillationIntegrators} is shown in \cref{alg:UncondKRK}, where the prefix \emph{Uncond}, to be discussed later, indicates that the mass matrix being used is $M=I$.
 The algorithm for the RKR sequence in \eqref{eq:IntroOscillationIntegrators} is a slight reordering of a few lines of code and is not shown.
 \cref{alg:UncondKRK} (but not its RKR counterpart) was tested in \citep{Shahbaba2014}.\footnote{It is perhaps of interest to mention that in \cref{alg:UncondKRK} the stepsize $\epsilon$ is randomized for the same reasons as in \cref{alg:Verlet}. If only the stepsize used in the $U_1$-kicks is randomized, while the stepsize in $\exp(\epsilon\Lambda)$ is kept constant, then one still risks  losing ergodicity when $\epsilon L$ coincides with one of the periods present in the solution. This prevents precalculation, prior to the randomization of $\epsilon$, of the rotation matrix $\exp(\epsilon\Lambda)$.
 }

\begin{algorithm}[h]
\caption{UncondKRK}
\textbf{Input}: $\theta, Z, \Lambda, U,\mathcal{J},H,\bar{\epsilon},L$
\label{alg:UncondKRK}
  \begin{algorithmic}[1]
\State Draw $\xi\sim\mathcal{N}(0,I),\epsilon\sim\bar{\epsilon}\times\mathcal{U}_{[0.8,1]}$
\State Compute $e^{\epsilon\Lambda}$
  \State $\theta',p\gets \theta,\xi$
  \For{$i=\{1,\ldots,L\}$} \Comment{Do KRK integration}
\State $p\gets p-\frac{\epsilon}{2}\left(\nabla U(\theta')-\mathcal{J}(\theta'-\theta^*)\right)$
\State $\theta',p\gets Z(\theta'-\theta^*),Zp$
    \State $\theta',p\gets e^{\epsilon\Lambda}\begin{pmatrix}\theta'\\p\end{pmatrix}$
    \State $\theta',p\gets Z^T\theta'+\theta^*,Z^Tp$
   \State $p\gets p-\frac{\epsilon}{2}\left(\nabla U(\theta')-\mathcal{J}(\theta'-\theta^*)\right)$
  \EndFor
  \State $a\gets\min\left\{1,\exp(H(\theta,\xi)-H(\theta',p))\right\}$
\State Draw $\gamma\sim\mathcal{B}(a)$
\State $\theta\gets \gamma\theta'+(1-\gamma)\theta$
\end{algorithmic}
\end{algorithm}

Since the numerical integration in \cref{alg:UncondKRK} would be exact if $U_1$ vanished (leading to acceptance of all proposals), the algorithm is appealing in cases where $U_1$ is \lq\lq small\rq\rq\ with respect to $H_0$.
In some applications, a decomposition $U=U_0+U_1$ with small $U_1$ may suggest itself. For a \lq\lq general\rq\rq\ $U$ one may always define $U_0$ by choosing $\theta^*$  to be one of the modes of the target $\propto \exp(-U(\theta))$ and $\mathcal{J}$ the Hessian of $U$ evaluated at $\theta^\star$; in this case the success of the splitting hinges on how well $U$ may be approximated by its second-order Taylor expansion $U_0$ around $\theta^\star$. In that setting, $\theta^*$  would  typically have to be found numerically by minimizing $U$. Also $Z$ and $D$ would typically be derived by numerical approximation, thus leading to computational overheads for  \cref{alg:UncondKRK} not present in \cref{alg:Verlet}. However, as pointed out in \citep{Shahbaba2014}, the cost of computing $\theta^\star$, $Z$ and $D$ before  the sampling begins is, for the test problems to be considered in this paper, negligible when compared with the cost of obtaining the samples.

\subsection{Nesting}
When a decomposition $U=U_0+U_1$, with $U_1$ small, is available but the Hamiltonian system with Hamiltonian
$H_0 = \mathcal{T}+U_0$ cannot be integrated in closed form, one may still construct schemes based on the recipe \eqref{eq:VerletStrang}. One step of the integrator is defined as
\begin{equation}\label{eq:ShahbabaDSVerlet}
\varphi_{\epsilon/2}^{[U_1]}\circ\left(\varphi_{\epsilon/2k}^{[U_0]}\circ\varphi^{[\mathcal{T}]}_{\epsilon/k}\circ\varphi_{\epsilon/2k}^{[U_0]}\right)^{k}\circ\varphi_{\epsilon/2}^{[U_1]},
\end{equation}
where $k$ is a suitably large integer. Here the (untractable) exact flow of $H_0$ is numerically approximated by KDK Verlet using $k$ substeps of length $\epsilon/k$. In this way, kicks with the small $U_1$ are performed with a stepsize $\epsilon/2$ and kicks with the large $U_0$ benefit from the smaller stepsize $\epsilon/(2k)$.
This idea has been successfully used  in Bayesian applications in \citep{Shahbaba2014}, where it is called \lq\lq nested Verlet\rq\rq.  The small $U_1$ is obtained summing over  data points that contribute little to the loglikelihood and the contributions from the most significant data are included in $U_0$.

Integrators similar to \eqref{eq:ShahbabaDSVerlet} have a long history in molecular dynamics, where they are known as multiple timestep algorithms \citep{Tuckerman1992,Leimkuhler2015,Grubmuller1991}.

\section{Shortcomings of the unconditioned KRK and RKR samplers}\label{sec:shortcomings}
As we observed above,  \cref{alg:UncondKRK} is appealing when $U_1$ is a small perturbation of the quadratic Hamiltonian $H_0$. In particular, one would expect that since the numerical integration in \cref{alg:UncondKRK} is exact when $U_1$ vanishes, then this algorithm may be operated with stepsizes $\epsilon$ chosen solely in terms of the size of $U_1$, independently of $H_0$. If that were the case one would expect that  \cref{alg:UncondKRK} may work well with large $\epsilon$ in situations where \cref{alg:Verlet} requires $\epsilon$ small and therefore much computational effort. Unfortunately those expectations are not well founded, as we shall show next by means of an example.

We study the model Hamiltonian with $\theta,p\in\mathbb{R}^2$ given by
\begin{equation}\label{eq:MDUnconHam}
H(\theta,p)= H_0(\theta,p)+U_1(\theta),\quad H_0 = \frac{1}{2}p^Tp+\frac{1}{2}\theta^T\begin{pmatrix}\sigma_1^{-2}& 0\\ 0 & \sigma_2^{-2}\end{pmatrix}\theta,\qquad U_1= \frac{\kappa}{2} \theta^T\theta.
\end{equation}
The model is restricted to $\R^2$ just for notational convenience; the extension to $\R^d$ is straightforward.
The quadratic Hamiltonian $H_0$ is rather general---any Hamiltonian system with quadratic Hamiltonian $(1/2)p^TM^{-1}p+(1/2)\theta^TW\theta$ may be brought with a change of variables to a system with Hamiltonian of the form $(1/2)p^Tp+(1/2)\theta^TD\theta$, with $M,W$ symmetric, positive definite matrices and $D$ diagonal and positive definite \citep{Blanes2014,BouRabee2017}.
In \eqref{eq:MDUnconHam}, $\sigma_1$ and $\sigma_2$ are the standard deviations of the bivariate Gaussian distribution with density
$\propto \exp(-U_0(\theta))$ (i.e\ of the target in the unperturbed situation $U_1=0$). We choose the labels of the scalar components $\theta_1$ and $\theta_2$ of $\theta$
to ensure $\sigma_1\leq\sigma_2$ so that,  for the probability density $\propto \exp(-U_0(\theta))$, $\theta_1$ is more constrained than $\theta_2$. In addition, we assume  that $\kappa$ is small with respect to $\sigma^{-2}_1$ and $\sigma^{-2}_2$, so that in \eqref{eq:MDUnconHam} $U_1$ is a small perturbation of $H_0$. The Hamiltonian equations of motion for $\theta_i$, given by $\der{}{t} \theta_i = p_i$,
$\der{}{t} p_i = -\omega_i^2 \theta_i$, with $\omega_i=(\sigma_i^{-2}+\kappa)^{1/2}\approx \sigma_i^{-1}$,  yield $\der[2]{}{t} \theta_i+\omega_i^2 \theta_i =0$. Thus the dynamics of $\theta_1$ and $\theta_2$ correspond to two
uncoupled harmonic oscillators; the component $\theta_i$, $i=1,2$, oscillates with an angular frequency $\omega_i$ (or with a period $2\pi/\omega_i$).

We note, regardless of the integrator being used, the correlation between the proposal and the current state of the Markov chain will be large if the integration is carried out over a time interval $T = \epsilon L$ much smaller than the periods $2\pi/\omega_i$ of the harmonic oscillators \citep{Neal2011,BouRabee2017}. Since $2\pi/\omega_2$ is the longest of the two periods, $L$ has then to be chosen
\begin{equation}\label{eq:L}
L \geq \frac{C}{\epsilon \omega_2} \approx \frac{C\sigma_2}{\epsilon},
\end{equation}
where $C$ denotes a constant of moderate size. For instance, for the choice $C = \pi/2$, the proposal for $\theta_2$ is uncorrelated at stationarity with the current state of the Markov chain as discussed  in e.g.\ \citep{BouRabee2017}.

For the KDK Verlet integrator, it is well known that, for stability reasons \citep{Neal2011,BouRabee2018}, the integration has to be operated with a stepsize $\epsilon < 2/ {\rm max}(\omega_1,\omega_2)$, leading to a stability limit
\begin{equation}\label{eq:verletlimit}
\epsilon \approx
2\sigma_1;
\end{equation}
integrations with larger $\epsilon$ will lead to extremely inaccurate numerical solutions. This stability restriction originates from  $\theta_1$, the component with greater precision in the Gaussian distribution
$\propto \exp(-U_0)$.  Combining \eqref{eq:verletlimit} with \eqref{eq:L}
we conclude that, for Verlet, the number of timesteps $L$    has to be chosen larger than a moderate multiple of $\sigma_2/\sigma_1$. Therefore when $\sigma_1\ll\sigma_2$ the computational cost of the Verlet integrator will necessarily be very large. Note that the inefficiency arises when the sizes of $\sigma_1$ and $\sigma_2$ are widely different; the first sets an upper bound for the stepsize and the second a lower bound on the length $\epsilon L$ of the integration interval. Small or large values of $\sigma_1$ and $\sigma_2$ are not dangerous \emph{per se} if $\sigma_2/\sigma_1$ is moderate.

We now turn to the KRK integrator in \eqref{eq:IntroOscillationIntegrators}. For the $i$-th scalar component of $(\theta,p)$, a timestep of the KRK integrator reads
$$
\begin{pmatrix}\theta_i\\p_{i}\end{pmatrix}
\gets
\begin{pmatrix}
		1 & 0\\
		-{\epsilon\kappa}/{2} & 1
	\end{pmatrix}
	\begin{pmatrix}
		\cos(\epsilon/\sigma_i) & \sigma_i\sin(\epsilon/\sigma_i)\\
                   -\sigma_i^{-1}\sin(\epsilon/\sigma_i) & \cos(\epsilon/\sigma_i)
	\end{pmatrix}
	\begin{pmatrix}
		1 & 0\\
		-{\epsilon\kappa}/{2} & 1
	\end{pmatrix}
\begin{pmatrix}\theta_i\\p_{i}\end{pmatrix}
$$
or
\begin{gather*}
\begin{pmatrix}\theta_i\\p_{i}\end{pmatrix}
\gets
\begin{pmatrix}
		\cos(\epsilon/\sigma_i) -(\epsilon\sigma_i\kappa/2)\sin(\epsilon/\sigma_i)& \sigma_i\sin(\epsilon/\sigma_i)\\
(\epsilon^2\sigma_i\kappa^2/4)\sin(\epsilon/\sigma_i)-\sigma_i^{-1}\sin(\epsilon/\sigma_i)+\epsilon\kappa\cos(\epsilon/\sigma_i)) & \cos(\epsilon/\sigma_i) -(\epsilon\sigma_i\kappa/2)\sin(\epsilon/\sigma_i)
	\end{pmatrix}
\begin{pmatrix}\theta_i\\p_{i}\end{pmatrix}.
\end{gather*}
Stability is  equivalent to $\lvert\cos(\epsilon/\sigma_i) -(\epsilon\sigma_i\kappa/2)\sin(\epsilon/\sigma_i)\rvert<1$, which, for $\kappa > 0$, gives
$
2\cot((\epsilon/(2\sigma_i))>\epsilon\kappa\sigma_i
$.
From here it is easily seen that stability in the $i$-th component is lost for $\epsilon/ \sigma_i\approx \pi$ for arbitrarily small $\kappa> 0$. Thus the KRK stability limit is
\begin{equation}\label{eq:KRKlimit}
\epsilon \approx \pi \sigma_1.
\end{equation}
While this is less restrictive than \eqref{eq:verletlimit}, we see that stability  imposes an upper bound for $\epsilon$ in terms of $\sigma_1$, just as for Verlet. From \eqref{eq:L},  the KRK integrator, just like Verlet, will have a large computational cost when
$\sigma_1\ll\sigma_2$. This is in spite of the fact that the integrator would be exact for $\kappa=0$, regardless of the values of $\sigma_1$, $\sigma_2$.

For the RKR integrator a similar analysis shows that the stability limit is also given by \eqref{eq:KRKlimit}; therefore that integrator suffers from the same shortcomings as KRK.

We also note that, since as $k$ increases the nested integrator \eqref{eq:ShahbabaDSVerlet} approximates the KRK integrator, the counterexample above may be used to show that the nested integrator has to be operated with a stepsize $\epsilon$ that is limited by the smallest standard deviations present in $U_0$, as  is the case for Verlet, KRK and RKR. For the stability of \eqref{eq:ShahbabaDSVerlet} and related multiple timestep techniques, the reader is referred to \citep{Garcia1998} and its references.
The nested integrator will not be considered further in this paper.

\section{Preconditioning}\label{sec:preconditioning}
As pointed out above, without additional information on the target, it is standard to set $M = I$. When $U =U_0+U_1$,  with $U_0$ as in \eqref{eq:U0}, it is useful to consider a \emph{preconditioned} Hamiltonian
with $M = \mathcal{J}$:
\begin{equation}
H^{[precond]}(\theta,p)=\frac{1}{2}p^T\mathcal{J}^{-1}p+U(\theta)
=\frac{1}{2}p^T\mathcal{J}^{-1}p+\frac{1}{2}(\theta-\theta^*)^T\mathcal{J}(\theta-\theta^*)+U_1(\theta)
\label{eq:IntroPrecondHam0}.
\end{equation}
Preconditioning is motivated by the observation that the equations of motion for the Hamiltonian
$$
H_0^{[precond]}(\theta,p)=\frac{1}{2}p^T\mathcal{J}^{-1}p
+\frac{1}{2}(\theta-\theta^*)^T\mathcal{J}(\theta-\theta^*),
$$
given by $\der{}{t} \theta = \mathcal{J}^{-1} p$, $\der{}{t} p = -\mathcal{J}(\theta-\theta^\star)$, yield $\der[2]{}{t} (\theta-\theta^\star) + (\theta-\theta^\star)=0$. Thus we now have
$d$ uncoupled scalar harmonic oscillators (one for each scalar component $\theta_i-\theta_i^\star$) sharing a \emph{common} oscillation frequency $\omega=1$.\footnote{The fact that the frequency is of moderate size is irrelevant; the value of the frequency may be arbitrarily varied by rescaling $t$. What is important is that all frequencies coincide.} This is to be compared with the situation for \eqref{eq:H0}, where, as we have seen in the model \eqref{eq:MDUnconHam}, the frequencies are the reciprocals $1/\sigma_i$ of the standard deviations of the  distribution $\propto \exp(-U_0(\theta))$. Since, as we saw in Section~\ref{sec:shortcomings}, it is the differences in size of the frequencies of the harmonic oscillators that cause the inefficiency of the integrators, choosing the mass matrix to  ensure that all oscillators have the \emph{same} frequency is of clear interest. We call \emph{unconditioned} those Hamiltonians/integrators where the mass matrix is chosen as the identity agnostically without specializing it to the problem.

For reasons explained in \citep{Beskos2013Optimal} it is better, when $\mathcal{J}$ has widely different eigenvalues, to numerically integrate the preconditioned equations of motion after rewriting them with the variable $v =M^{-1}p = \mathcal{J}^{-1}p$ replacing $p$. The differential equations and solution flows of the subproblems are then given by
$$
\der{}{t}\begin{pmatrix}\theta\\v\end{pmatrix}=\begin{pmatrix}0\\-\mathcal{J}^{-1}\nabla_\theta U_1(\theta)\end{pmatrix}\implies \varphi_t^{[U_1]}(\theta,v)=\begin{pmatrix}\theta\\v-t\mathcal{J}^{-1}\nabla_\theta U_1(\theta)\end{pmatrix},
$$
and
\begin{gather*}
\der{}{t}\begin{pmatrix}\theta\\v\end{pmatrix}=\begin{pmatrix}0 & I\\ -I & 0\end{pmatrix}\begin{pmatrix}(\theta-\theta^*)\\v\end{pmatrix}\implies \varphi_t^{[H^{[precond]}_0]}(\theta,v)=\begin{pmatrix}\cos(t) & \sin(t)\\ -\sin(t) & \cos(t)\end{pmatrix}\begin{pmatrix}(\theta-\theta^*)\\v\end{pmatrix}+\begin{pmatrix}\theta^*\\0\end{pmatrix}.
\end{gather*}
Since $\mathcal{J}$ is a symmetric, positive definite matrix, it admits a Cholesky factorisation $\mathcal{J}=BB^T$. The inversion of $\mathcal{J}$ in the kick may thus be performed efficiently using Cholesky-based solvers from standard linear algebra libraries. It also means it is easy to draw from the distribution of $v\sim B^{-T}\mathcal{N}(0,I)$.

Composing the exact maps $\varphi_{\epsilon}^{[.]}$ using  Strang's recipe \eqref{eq:VerletStrang} then gives a numerical one-step map $\psi_\epsilon^{[.]}$ in either an RKR or KRK form. The preconditioned KRK (PrecondKRK) algorithm is shown in \cref{alg:PrecondKRK}; the RKR version is similar and will not be given.

\begin{algorithm}[h]
\caption{PrecondKRK}
\textbf{Input}: $\theta, B^{-T}, \bar{\epsilon}, U, \mathcal{J},\theta^*,L,H$
\label{alg:PrecondKRK}
  \begin{algorithmic}[1]
  \State Draw $\xi\sim B^{-T}\mathcal{N}(0,I),\epsilon\sim\bar{\epsilon}\times\mathcal{U}_{[0.8,1]}$ 
  \State $\theta',v\gets \theta,\xi$
  \For{$i=\{1,\ldots,L\}$}
   \State $v\gets v-\frac{\epsilon}{2}\left(\mathcal{J}^{-1}\nabla_\theta U(\theta')-(\theta'-\theta^*)\right)$ 
    \State $\theta'\gets (\theta'-\theta^*)$
    \State $\theta',v\gets\theta'\cos(\epsilon)+v\sin(\epsilon),v\cos(\epsilon)-\theta'\sin(\epsilon)$ 
    \State $\theta'\gets \theta'+\theta^*$
    \State $v\gets v-\frac{\epsilon}{2}\left(\mathcal{J}^{-1}\nabla_\theta U(\theta')-(\theta'-\theta^*)\right)$
  \EndFor
\State $a\gets\min\left\{1,\exp(H(\theta,\xi)-H(\theta',v))\right\}$
\State Draw $\gamma\sim\mathcal{B}(a)$
\State $\theta\gets \gamma\theta'+(1-\gamma)\theta$
\end{algorithmic}
\end{algorithm}
\bigskip
Of course it is also possible to use the KDK Verlet \cref{alg:Verlet} with preconditioning ($M= \mathcal{J}$) (and $v$ replacing $p$). The resulting algorithm may be seen in \cref{alg:PrecondVerlet}.

\begin{algorithm}[h]
\caption{PrecondVerlet}
\textbf{Input}: $\theta, B^{-T}, \bar{\epsilon}, U, \mathcal{J},\theta^*,L,H$
\label{alg:PrecondVerlet}
  \begin{algorithmic}[1]
  \State Draw $\xi\sim B^{-T}\mathcal{N}(0,I),\epsilon\sim\bar{\epsilon}\times\mathcal{U}_{[0.8,1]}$
  \State $\theta',v\gets \theta,\xi$
  \For{$i=\{1,\ldots,L\}$}
   \State $v\gets v-\frac{\epsilon}{2}\mathcal{J}^{-1}\nabla_\theta U(\theta')$
    \State $\theta'\gets \theta'+\epsilon v$
   \State $v\gets v-\frac{\epsilon}{2}\mathcal{J}^{-1}\nabla_\theta U(\theta')$
  \EndFor
\State $a\gets\min\left\{1,\exp(H(\theta,\xi)-H(\theta',v))\right\}$
\State Draw $\gamma\sim\mathcal{B}(a)$
\State $\theta\gets \gamma\theta'+(1-\gamma)\theta$
\end{algorithmic}
\end{algorithm}

Applying these algorithms to the model problem \eqref{eq:MDUnconHam}, an analysis parallel to that carried out in Section~\ref{sec:shortcomings} shows that the decorrelation condition \eqref{eq:L} becomes, \emph{independently of $\sigma_1$ and $\sigma_2$}
$$
L \gtrapprox C/ \epsilon
$$
and the stability limits in \eqref{eq:verletlimit} and \eqref{eq:KRKlimit}  are now replaced, \emph{also independently of the values of $\sigma_1$ and $\sigma_2$,} by
$$
\epsilon \approx 2, \qquad \epsilon \approx \pi,
$$
for Algorithm~\ref{alg:PrecondVerlet} and Algorithm~\ref{alg:PrecondKRK} respectively. The stability limit for the PrecondRKR algorithm coincides with that of the PrecondKRK method.
(See also Appendix B.)

The idea of preconditioning is extremely old; to our best knowledge it goes back  to \citep{Bennett1975}. The algorithm in \citep{Girolami2011} may be regarded as a $\theta$-dependent preconditioning. For preconditioning in infinite dimensional problems see \citep{Beskos2011}.

\section{Numerical results}\label{sec:numerical}
In this section we test the following algorithms:
\begin{itemize}
\item Unconditioned Verlet: \cref{alg:Verlet} with $M=I$.
\item Unconditioned KRK: \cref{alg:UncondKRK}.
\item Preconditioned Verlet: \cref{alg:PrecondVerlet}.
\item Preconditioned KRK: \cref{alg:PrecondKRK}.
\item Preconditioned RKR: similar to \cref{alg:PrecondKRK} using a rotate-kick-rotate pattern instead of kick-rotate-kick.
\end{itemize}
The first two algorithms were compared in \citep{Shahbaba2014} and in fact we shall use the exact same logistic regression test problems used in that reference. If ${x}$ are the prediction variables and $y\in\{0,1\}$,  the likelihood for the test problems is ($\widetilde{{x}}=\left[1,{x}^T\right]^T,{\theta}=\left[\alpha,{\beta}^T\right]^T$)
\begin{equation}\label{eq:BILogRegLikelihood}
\mathcal{L}(\theta;x,y)=\prod_{i=1}^n\left(1+\exp(-{\theta}^T
\widetilde{{x}}_i)\right)^{-y}\left(1+\exp({\theta}^T\widetilde{{x}}_i)\right)^{y-1}.
\end{equation}

For the preconditioned integrators, we set $U_0$ as in \eqref{eq:U0} with $\theta^*$ given by the maximum a posteriori (MAP) estimation and $\mathcal{J}$ the Hessian at $\theta^*$.

For the two unconditioned integrators, we run the values of $L$ and $\epsilon$ chosen in \citep{Shahbaba2014} (this choice is labelled as A in the tables). Since in many cases the autocorrelation for the unconditioned methods is extremely large with those parameter values (see \cref{fig:Autocorrs}), we also present results for these methods with a principled choice of $T$ and $\epsilon$ (labelled as B in the tables). We take $T=\epsilon L={\pi}/({2\omega_{\min}})$, where $\omega_{\min}$ is the minimum eigenvalue of $\sqrt{D}$ given in \cref{eq:IntroExpMap}. In the case where the perturbation $U_1$ is absent, this choice of $T$ would decorrelate the least constrained component of $\theta$. We then set $\epsilon$ as large as possible to ensure an acceptance rate above 65\% \citep{Beskos2013Optimal}--- the stepsizes in the choice B are slightly smaller than the values used in \citep{Shahbaba2014}, and the durations $T$ are, for every dataset, larger. We are able thus to attain greater decorrelation, although at greater cost.
For the preconditioned methods, we set $T=\pi/2$, since this gives samples with 0 correlation in the case $U_1=0$, and then set the timestep $\epsilon$ as large as possible whilst ensuring the acceptance rate is above  65\%.

In every experiment we start the chain from the (numerically calculated) MAP estimate $\theta^\star$ of $\theta$ and acquire $N_s=5\times10^4$ samples. The autocorrelation times reported are calculated using the \texttt{emcee} function \texttt{integrated\_time} with the default value $c=5$ \citep{emcee}. We also estimated autocorrelation times using alternative methods \citep{Geyer1992,Neal1993,Sokal1997,Thompson2010}; the results obtained do not differ significantly from those reported in the tables.

Finally, note that values of $\bar\epsilon$ quoted in the tables are the maximum timestep that the algorithms operate with, since the randomisation follows $\epsilon\sim \bar{\epsilon}\times\mathcal{U}_{[0.8,1]}$. All code is available from the github repository \url{https://github.com/lshaw8317/SplitHMCRevisited}.

\subsection{Simulated Data}\label{sec:SimData}
We generate simulated data according to the same procedure and parameter values described in \citep{Shahbaba2014}.
The first step is to generate
${x}\sim\mathcal{N}(0,{\sigma}^2)$ with ${\sigma}^2=\mathrm{diag}\left\{\sigma_j^2: j=1\ldots,d-1\right\}$, where
$$
\sigma^2_j=\begin{cases}
25 &  j\leq 5\\
1 &  5<j\leq 10\\
0.04 &  j>10
\end{cases}.
$$
Then, we generate the true parameters $\hat{{\theta}}=[\alpha,{\beta}^T]^T$ with $\alpha\sim\mathcal{N}(0,\gamma^2)$ and the vector ${\beta}\in\mathbb{R}^{d-1}$ with independent components following $\beta_j\sim\mathcal{N}(0,\gamma^2),j=1,\ldots,d-1$, with $\gamma^2=1$.
Augmenting the data $\widetilde{{x}}_i=[1,{x}_i^T]^T$, from a given sample ${x}_i$, $y_i$ is then generated as a Bernoulli random variable $y_i\sim\mathcal{B}((1+\exp(-\hat{{\theta}}^T\widetilde{{x}}_i))^{-1})$.
In concreteness, a simulated data set $\{{x}_i,y_i\}_{i=1}^n$ with $n=10^4$ samples is generated, ${x}_i\in\mathbb{R}^{d-1}$ with $d-1=100$. The sampled parameters ${\theta}\in\mathbb{R}^d$ are assumed to have a prior $\mathcal{N}(0,\Sigma)$ with $\Sigma=\mathrm{diag}\left\{25: j=1\ldots,d\right\}$.
\begin{figure}[t]
    \begin{subfigure}{.45\textwidth}
	    \includegraphics[width=\textwidth]{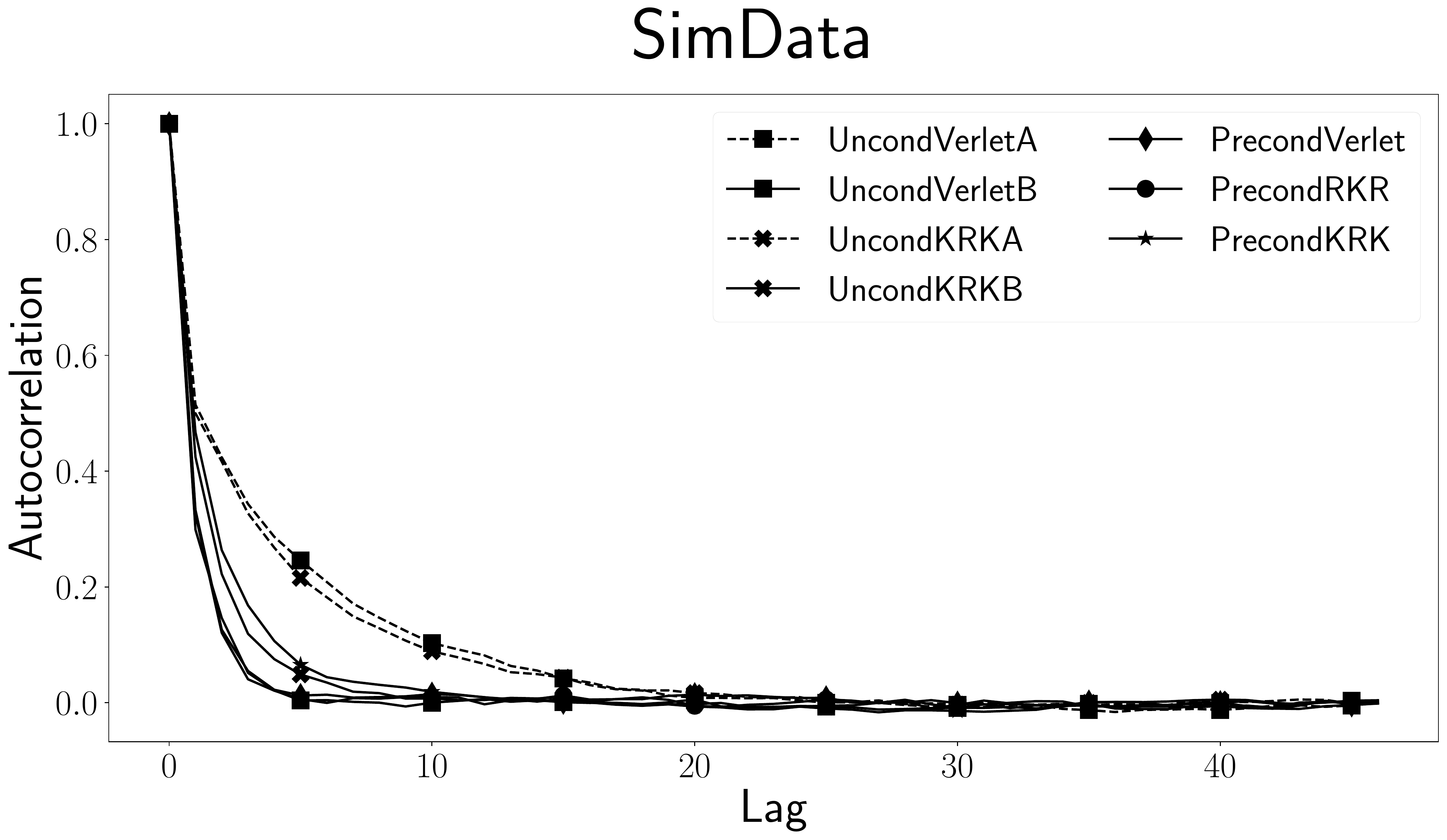}
	    \subcaption{}
    \end{subfigure}\quad\quad
    \begin{subfigure}{.45\textwidth}
	    \includegraphics[width=\textwidth]{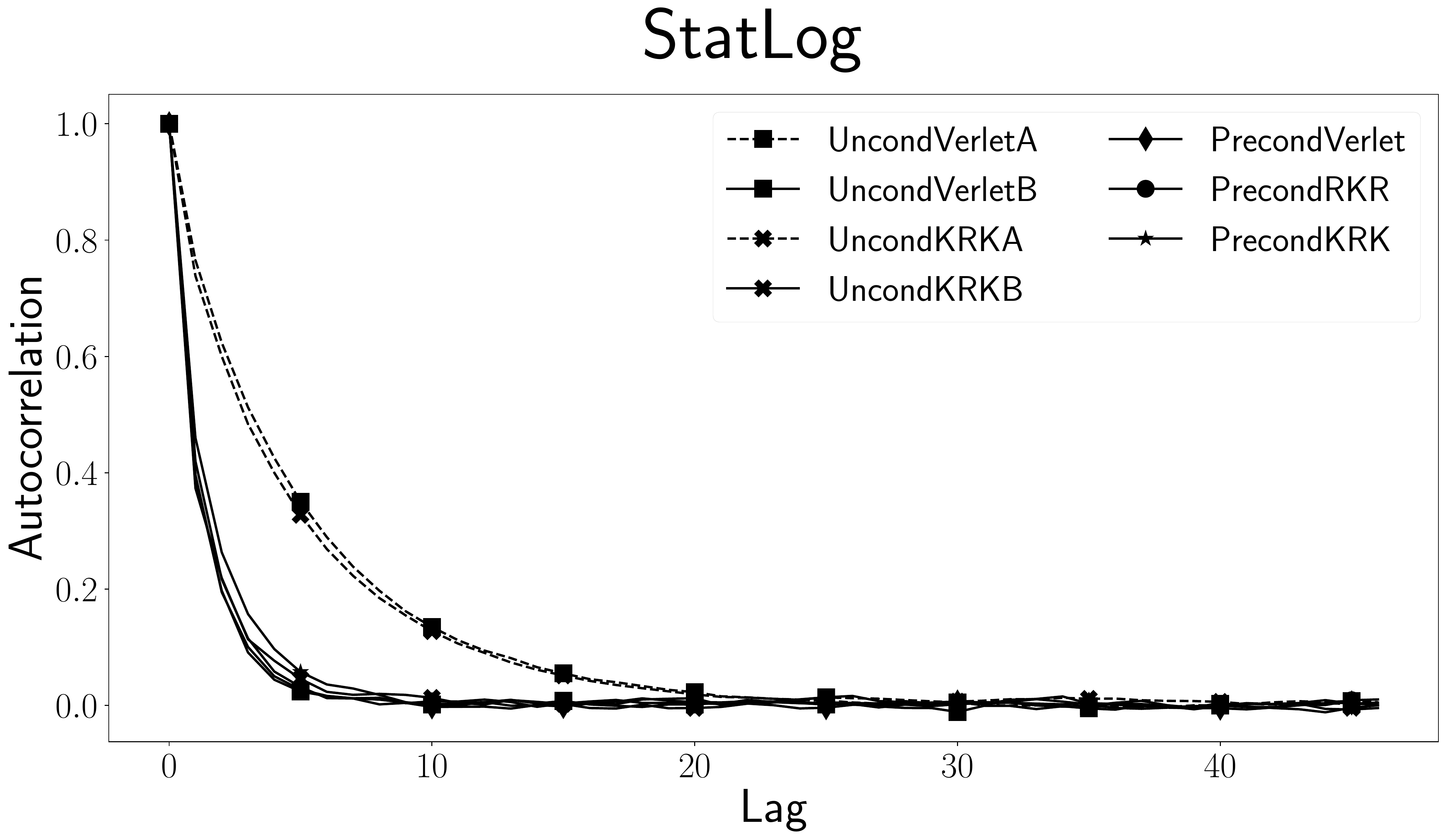}
	    \subcaption{}
    \end{subfigure}

    \begin{subfigure}{.45\textwidth}
	    \includegraphics[width=\textwidth]{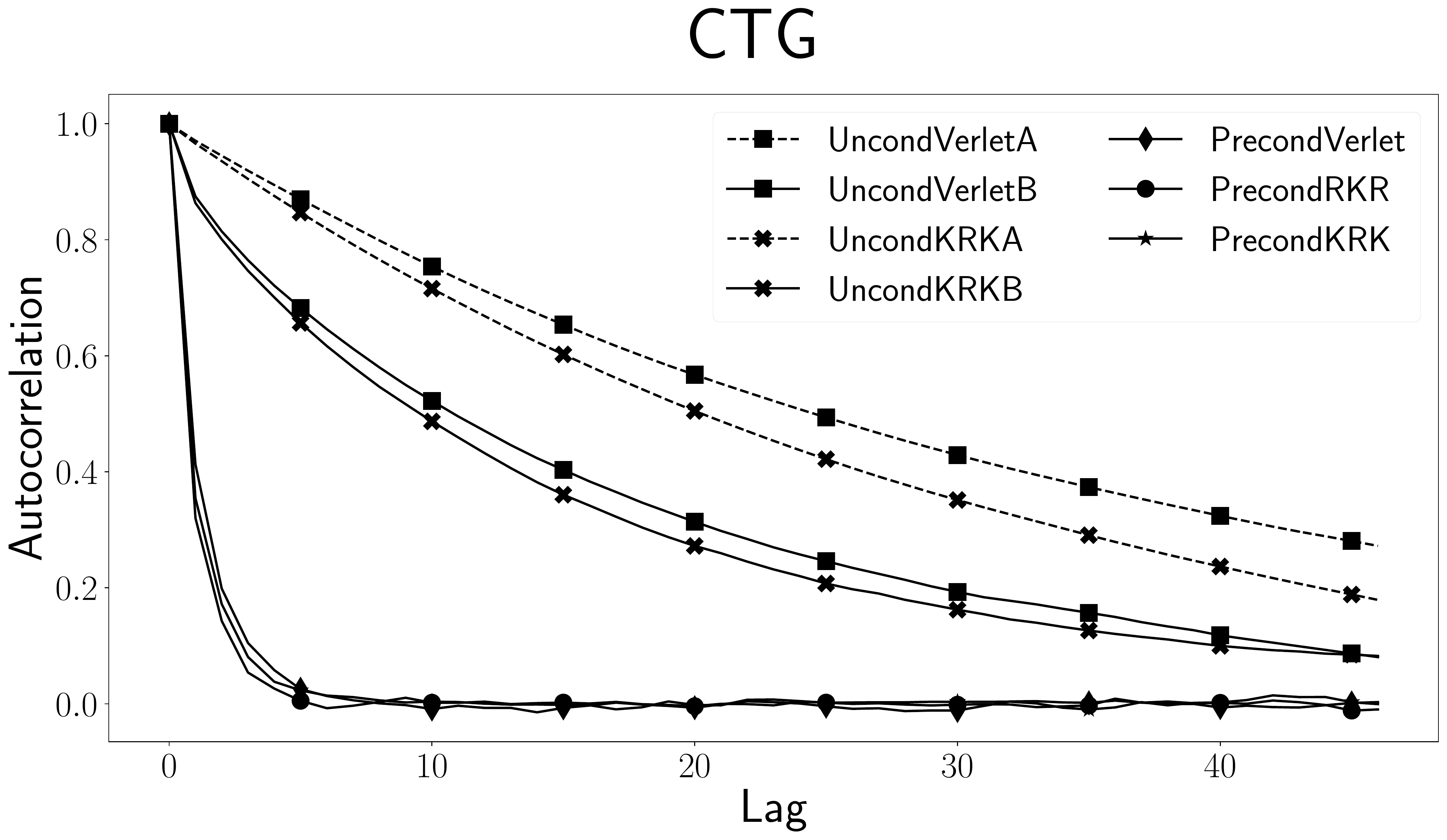}
	    \subcaption{}
    \end{subfigure}\quad\quad
    \begin{subfigure}{.45\textwidth}
	    \includegraphics[width=\textwidth]{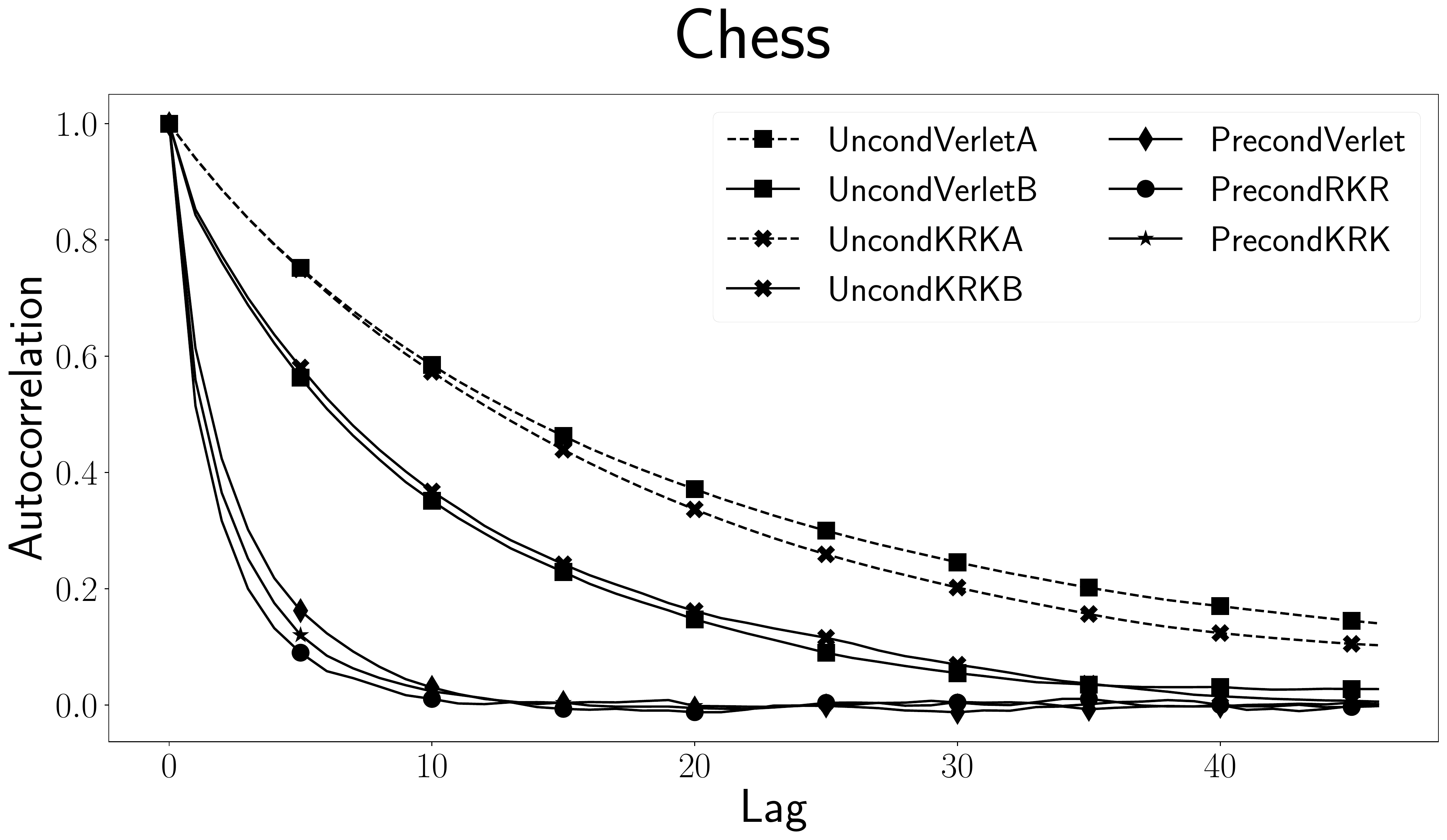}
	    \subcaption{}
    \end{subfigure}
    \caption{Autocorrelation function plots for the slowest moving component associated to the IAC $\tau_{\max}$ for each dataset. For the unconditioned methods, we show the principled choice B (solid line) and the choice  A from \citep{Shahbaba2014} (dotted). The values of $\epsilon$ and $T$ are as given in the tables.}
\label{fig:Autocorrs}
\end{figure}

Results are given in Table \ref{tab:simulated}. The second column gives the number $L$ of timesteps per proposal and the third the computational time $s$ (in milliseconds) required to
generate a single sample. The next columns give, for three observables, the products $\tau\times s$,
 with $\tau$ the integrated autocorrelation (IAC) time. These products measure the computational time to generate
 one independent sample. The notation $\tau_\ell$ refers to the observable $f(\theta)=\log(\mathcal{L}(\theta;x,y))$
 where $\mathcal L$ is the likelihood in \eqref{eq:BILogRegLikelihood}, and $\tau_{\theta^2}$ refers
  to $f(\theta)=\theta^T\theta$. The degree of correlation measured by $\tau_{\ell}$ is important in
  optimising the cost-accuracy ratio of predictions of $y$, while $\tau_{\theta^2}$ is relevant to estimating
   parameters of the distribution of $\theta$ \citep{Andrieu2003,GelmanBDA}.
 Following   \citep{Shahbaba2014}, we also examine the maximum IAC over all the Cartesian components
 of ${\theta}$, since we set the time $T$ in order to decorrelate the slowest-moving/least constrained
  component. Finally the last column provides the observed
   rate of acceptance.

 Comparing the values of $\tau\times s$ in the first four rows of the table shows the advantage, emphasized in \citep{Shahbaba2014}, of the $H_0+U_1$ \eqref{eq:main} over
  the kinetic/potential splitting: Unconditioned KRK operates with smaller values of $L$ than Unconditioned Verlet and the values of $\tau\times s$ are smaller for Unconditioned KRK than for unconditioned Verlet.
  However when comparing the results for Unconditioned Verlet A or B with those for Preconditioned Verlet,
it is apparent that the advantage of using the Hessian $\mathcal J$ to split $U=U_0+U_1$ with $M=I$ is much
smaller than the advantage of using $\mathcal J$ to precondition the integration while keeping the
kinetic/potential splitting.

The best performance is observed for the Preconditioned  KRK and RKR algorithms
that avail themselves of the Hessian \emph{both} to precondition and to use rotation instead of drift. Preconditioned RKR is clearly better
than its KRK counterpart (see Appendix B). For this problem, as shown in Appendix A, $U_1$ is in fact small and therefore the restrictions of the stepsize for the KRK integration are due to the stability reasons outlined in Section~\ref{sec:shortcomings}. In fact,
for the unconditioned algorithms, the stepsize $\bar \epsilon_{KRK}=0.03$ is not substantially larger than $\bar \epsilon_{Verlet}=0.015$, in agreement with the analysis presented in that section.

 The need to use large values of $L$ in the unconditioned integration stems, as discussed above, from the coexistence of large differences between the frequencies of the harmonic oscillators. In this problem the minimum and maximum frequencies are $\omega_{\min}=2.6,\omega_{\max}=105.0$.
 \begin{table}[h!]
\centering
\begin{tabular}{|c|c|c|c|c|c|c|}
\hline
& $L$ & $s$ [ms] & $\tau_{\ell}\times s$  & $\tau_{\theta^2}\times s$ &$\tau_{\max}\times s$ & AP \\ \hline\hline
UncondVerlet A& 20 & 4.70 & $3.5\times s=16.5$ & $11.4\times s=53.6$ & $7.0\times s=32.9$  & 0.69 \\ \hline
UncondVerlet B& 40 & 8.49 & $3.7\times s=31.4$ & $2.6\times s=22.1$ & $2.0\times s=17.0$ & 0.68\\ \hline
UncondKRK A& 10 & 3.04 & $3.4\times s=10.3$ & $11.1\times s=33.8$ & $6.6\times s=20.1$  & 0.76\\ \hline
UncondKRK B& 20 & 5.27 & $3.9\times s=20.5$ & $3.3\times s=17.4$ & $3.0\times s=15.8$ & 0.69\\ \hline
PrecondVerlet& 3 & 1.60 & $2.5\times s=4.0$ & $2.3\times s=3.7$ & $2.3\times s=3.7$ & 0.79\\ \hline
PrecondKRK& 1 & 1.22 & $2.8\times s=3.4$ & $3.4\times s=4.2$ & $3.5\times s=4.3$ & 0.75\\ \hline
PrecondRKR& 1 & 0.99 & $1.6\times s=1.6$ & $2.1\times s=2.1$ & $2.1\times s=2.1$ & 0.87\\ \hline
\end{tabular}
\caption{SimData: For methods labelled A, parameters from \citep{Shahbaba2014}: $T=0.3$, $\bar{\epsilon}_{Verlet}=0.015$, $\bar{\epsilon}_{UKRK}=0.03$. For the unconditioned methods labelled B, $T=\pi/2\omega_{\min}=0.6$, and $\bar{\epsilon}_{Verlet}=0.015$,  $\bar{\epsilon}_{UKRK}=0.03$. For the preconditioned methods, $T=\pi/2$, and $\bar{\epsilon}_{Verlet}=T/3\approx 0.52$; the other preconditioned methods operate with $\bar{\epsilon}_{Precon}=T\approx 1.57$. }
\label{tab:simulated}
\end{table}

\subsection{Real Data}
The three real datasets considered in \citep{Shahbaba2014}, StatLog, CTG and Chess, are also examined, see Tables 2--4. For the StatLog and CTG datasets with the unconditioned Hamiltonian, KRK does not really provide an improvement on Verlet. In all three datasets, the preconditioned integrators clearly outperform the unconditioned counterparts. Of the three preconditioned algorithms Verlet is the worst and RKR the best.

\paragraph{StatLog} Here, $n=4435$, $d-1=36$. The frequencies are $\omega_{\min}=0.5,\omega_{\max}=22.8$.
\begin{table}[h!]
\centering
\begin{tabular}{|c|c|c|c|c|c|c|}
\hline
& $L$ & $s$ [ms] & $\tau_{\ell}\times s$  & $\tau_{\theta^2}\times s$ &$\tau_{\max}\times s$ & AP \\ \hline\hline
UncondVerlet A& 20 & 1.99 & $5.5\times s=11.0$ & $5.8\times s=11.6$ & $9.8\times s=19.5$  & 0.69\\ \hline
UncondVerlet B& 40 & 3.34 & $7.6\times s=25.4$ & $2.5\times s=8.3$ & $2.6\times s=8.7$ & 0.64\\ \hline
UncondKRK A& 14 & 1.73 & $6.2\times s=10.7$ & $5.7\times s=9.9$ & $9.5\times s=16.5$  & 0.72\\ \hline
UncondKRK B& 28 & 2.79 & $8.7\times s=24.3$ & $2.9\times s=8.1$ & $2.9\times s=8.1$ & 0.65\\ \hline
PrecondVerlet& 3 & 0.64 & $2.5\times s=1.6$ & $2.6\times s=1.7$ & $2.7\times s=1.7$ & 0.88\\ \hline
PrecondKRK& 2 & 0.60 & $2.9\times s=1.7$ & $3.2\times s=1.9$ & $3.3\times s=2.0$ & 0.88\\ \hline
PrecondRKR& 2 & 0.53 & $2.3\times s=1.2$ & $2.5\times s=1.3$ & $2.7\times s=1.4$ & 0.94\\ \hline
\end{tabular}
\caption{StatLog: For methods labelled A, parameters are from \citep{Shahbaba2014}: $T=1.6$, $\bar{\epsilon}_{Verlet}=0.08$, $\bar{\epsilon}_{UKRK}=0.114$. For the unconditioned methods labelled B, $T=\pi/2\omega_{\min}=3.26$, and $\bar{\epsilon}_{Verlet}=0.08$, $\bar{\epsilon} _{UKRK}= 0.114$.  For the preconditioned methods, $T=\pi/2$, and $\bar{\epsilon}_{Verlet}=T/3$; the other preconditioned methods operate with $\bar{\epsilon}_{Precon}=T/2$.}
\end{table}

\paragraph{CTG}  Here, $n=2126$, $d-1=21$. The frequencies are $\omega_{\min}=0.2,\omega_{\max}=23.9$.

\begin{table}[h!]
\centering
\begin{tabular}{|c|c|c|c|c|c|c|}
\hline
& $L$ & $s$ [ms] & $\tau_{\ell}\times s$  & $\tau_{\theta^2}\times s$ &$\tau_{\max}\times s$ & AP \\ \hline\hline
UncondVerlet A& 20 & 1.06 & $5.9\times s=6.2$ & $20.1\times s=21.2$ & $80.3\times s=84.8$  & 0.69\\ \hline
UncondVerlet B& 98 & 4.28 & $6.1\times s=26.1$ & $5.1\times s=21.8$ & $36.0\times s=154.2$ & 0.64\\ \hline
UncondKRK A& 13 & 0.95 & $6.5\times s=6.2$ & $17.9\times s=17.0$ & $53.0\times s=50.3$  & 0.77\\ \hline
UncondKRK B& 66 & 3.89 & $6.1\times s=23.7$ & $5.1\times s=19.8$ & $37.2\times s=144.6$ & 0.65\\ \hline
PrecondVerlet& 2 & 0.36 & $2.6\times s=0.9$ & $2.1\times s=0.7$ & $2.6\times s=0.9$ & 0.76\\ \hline
PrecondKRK& 2 & 0.41 & $1.8\times s=0.7$ & $1.8\times s=0.7$ & $2.4\times s=1.0$ & 0.90\\ \hline
PrecondRKR& 2 & 0.35 & $1.9\times s=0.7$ & $1.7\times s=0.6$ & $2.1\times s=0.7$ & 0.93\\ \hline
\end{tabular}
\caption{CTG: For runs labelled A, parameters are from \citep{Shahbaba2014}: $T=1.6$, $\bar{\epsilon}_{Verlet}=0.08$, $\bar{\epsilon}_{UKRK}=0.123$. For the unconditioned runs labelled B, $T=\pi/2\omega_{\min}=7.85$, and $\bar{\epsilon}_{Verlet}=0.08$,  $\bar{\epsilon}_{UKRK}=0.118$. For the preconditioned methods, $T=\pi/2$, and $\bar{\epsilon}=T/2$. }
\end{table}

\paragraph{Chess} Here, $n=3196$, $d-1=36$. The frequencies are $\omega_{\min}=0.3,\omega_{\max}=22.3$.

\begin{table}[h!]
\centering
\begin{tabular}{|c|c|c|c|c|c|c|}
\hline
& $L$ & $s$ [ms] & $\tau_{\ell}\times s$  & $\tau_{\theta^2}\times s$ &$\tau_{\max}\times s$ & AP \\ \hline\hline
UncondVerlet A& 20 & 1.52 & $12.2\times s=18.5$ & $18.9\times s=28.6$ & $42.3\times s=64.1$  & 0.62\\ \hline
UncondVerlet B& 65 & 4.20 & $3.6\times s=15.1$ & $1.5\times s=6.3$ & $19.9\times s=83.6$ & 0.68\\ \hline
UncondKRK A& 9 & 0.90 & $13.3\times s=11.9$ & $21.3\times s=19.1$ & $37.7\times s=33.8$  & 0.72\\ \hline
UncondKRK B& 40 & 3.31 & $4.1\times s=13.6$ & $1.9\times s=6.3$ & $22.1\times s=73.1$ & 0.64\\ \hline
PrecondVerlet& 2 & 0.46 & $2.6\times s=1.2$ & $3.1\times s=1.4$ & $5.2\times s=2.4$ & 0.63\\ \hline
PrecondKRK& 2 & 0.50 & $1.6\times s=0.8$ & $2.5\times s=1.2$ & $4.6\times s=2.3$ & 0.81\\ \hline
PrecondRKR& 2 & 0.44 & $1.6\times s=0.7$ & $2.2\times s=1.0$ & $3.8\times s=1.7$ & 0.85\\ \hline
\end{tabular}
\caption{Chess: For runs labelled A, parameters are from \citep{Shahbaba2014}: $T=1.8$, $\bar{\epsilon}_{Verlet}=0.09$, $\bar{\epsilon}_{UKRK}=0.2$.
For runs labelled B, $T=\pi/2\omega_{\min}=5.71$, and $\bar{\epsilon}_{Verlet}=0.087$,  $\bar{\epsilon}_{UKRK}=0.142$. For the preconditioned methods, $T=\pi/2$, and $\bar{\epsilon}=T/2$. }
\end{table}

\begin{appendices}

\section{Bernstein-von Mises theorem}

From the Bernstein-von Mises theorem (see e.g.\ section 10.2 in \citep{VanderVaart2000}), as the size of the
dataset $n$ increases unboundedly, the posterior distribution $\pi(\theta\mid x,y)$ becomes dominated by the
likelihood and is asymptotically  Gaussian; more precisely  $
\mathcal{N}(\hat{\theta},n^{-1}\mathcal{I}_F(\hat{\theta})^{-1}) $, where $\hat \theta$ represents the true
value and $\mathcal{I}_F$ denotes the Fisher information matrix. This observation shows that, at least for $n$
large, approximating the potential $\propto\exp(-U(\theta))$ by a Gaussian $\propto\exp(-U_0(\theta))$ with
mean $\theta^*$  as in Section \ref{sec:SimData} is meaningful.

\begin{figure}[h]
\includegraphics[width=\textwidth]{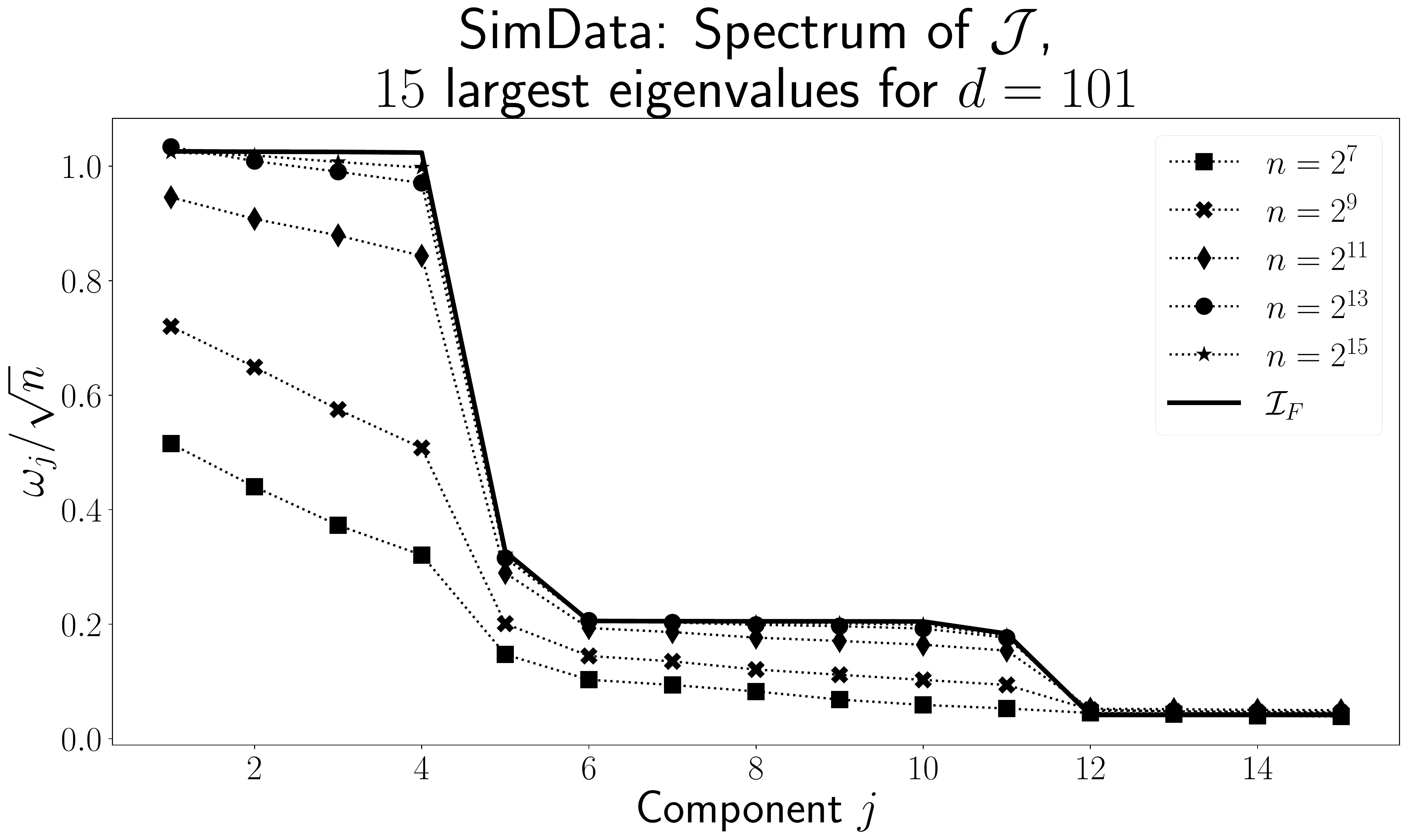}
\caption{The ordered rotation frequencies ($\omega_j=\sqrt{\lambda_j}$, with $\lambda_j$ an
eigenvalue of $\mathcal{J}$) scaled by $\sqrt{n}$ as the number of data points $n$ increases
 converge to the square roots of the eigenvalues of the (estimated) Fisher information matrix.}
\label{fig:SimDataSpectrum}
\end{figure}

\begin{figure}[h]
\includegraphics[width=\textwidth]{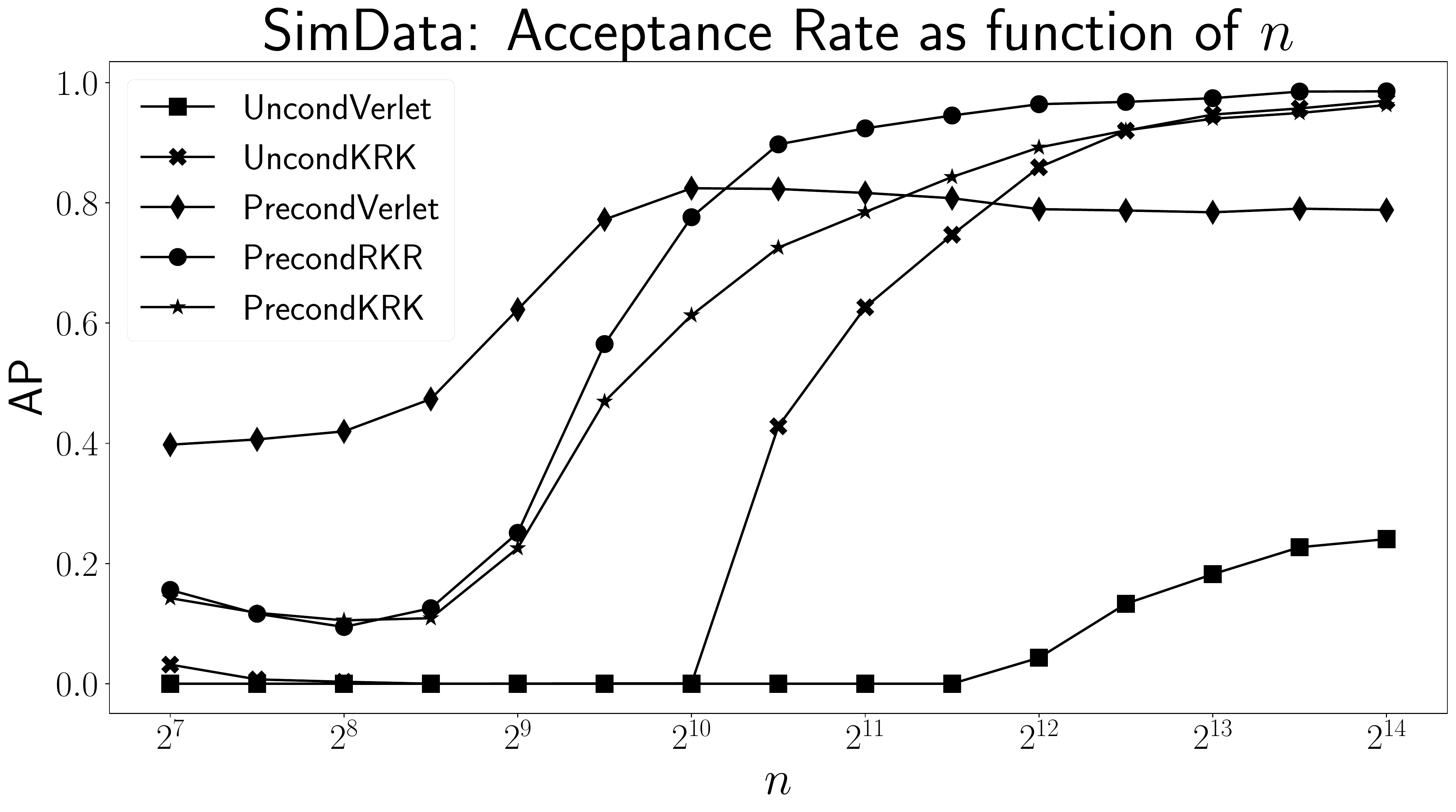}
\caption{As $n$ varies the algorithms are run with a fixed number $L$ of timesteps per proposal. For methods using
rotations the acceptance rate approaches $100\%$ as $n\uparrow\infty$.}
\label{fig:SimDatanrange}
\end{figure}

An illustration of the Bernstein-von Mises theorem is provided in Figure~\ref{fig:SimDataSpectrum} that
corresponds to the simulated data problem described in Section \ref{sec:SimData}. As the number of data points
increases from $2^7=128$ to $2^{14}= 32,768$, the scaled values $\omega_j/\sqrt{n}$ where $\omega_j^2$ are the
eigenvalues of the numerically calculated Hessian $\mathcal{J}(\theta^*)$ that we use in $U_0$ converge to the
square roots of the eigenvalues of the Monte Carlo estimation of the Fisher information matrix
$\mathcal{I}_F(\hat\theta)$ calculated using the true parameter values and the randomly generated ${ x}_i$.

A further illustration is provided in Figure~\ref{fig:SimDatanrange} where again the number of
data points  increases from $2^7=128$ to $2^{14}= 32,768$. The following parameter values are used:
\begin{itemize}
\item The preconditioned algorithms, where solutions of the Hamiltonian $H_0$ are periodic
with period $2\pi$, have $T=\pi/2$. For the KRK and RKR splittings we take two timesteps per proposal, i.e.\ $L=2$ and for the preconditioned Verlet, $L=3$.
\item For the unconditioned algorithms we set $T=(\pi/2)/\omega_{\min}$. i.e.\ a quarter of the largest period present in the solutions of $H_0$. Both Verlet and UncondKRK are operated with $L=30$ timesteps per proposal.
\end{itemize}
Note that since, as $n$ varies the value of $L$ for each algorithm remains constant,  the number of
evaluations of $\nabla U_1$ (for methods with rotations) or $\nabla U$ (for the Verlet integrator) remains
constant. The figure shows that, as $n$ increases, the acceptance rate for the methods Unconditioned KRK,
Preconditioned KRK, and Preconditioned RKR based on the splitting \eqref{eq:main} approaches $100\%$. These methods are exact when $U_1 = 0$ and
$\exp(-U)$ coincides with the Gaussian $\exp(-U_0)$ and therefore have smaller energy errors/larger acceptance
rates as $n$ increases. On the other hand the integrators based on the kinetic/potential splitting are not
exact when the potential $U$ is quadratic, and, correspondingly, we see that the acceptance rate does not
approach $100\%$ as $n\uparrow \infty$.

\section{Integrating the  preconditioned Hamilton equations}

 KRK and RKR are two possible reversible, symplectic integrators for the equations of motion corresponding to the preconditioned Hamiltonian \eqref{eq:IntroPrecondHam0}, but many others are of course possible. In this Appendix we  present a methodology to choose between different integrators. The material parallels an approach  suggested in \citep{Blanes2014} to choose between integrators for the kinetic/potential splitting; an approach that has been followed by a number of authors (see \citep{Blanes2021} for an extensive list of references). The methodology is based on using a Gaussian model distribution to discriminate between alternative algorithms, but, as shown in \citep{Calvo2021}, is very successful in predicting which algorithms will perform well for general distributions.

 To study the preconditioned $H_0+U_1$ splitting,  we select the model one-dimensional problem
\begin{equation}
H(\theta,p)=\frac{1}{2}(p^2+\theta^2)+\frac{1}{2}\kappa \theta^2.\label{eq:ModelProblem}
\end{equation}
We assume that $\kappa>-1$ so that the potential energy $(1/2) \theta^2+ (\kappa/2)\theta^2$ is positive definite. The application of one step (of length $\epsilon$) of an integrator for this problem in all practical contexts takes the linear form
\begin{equation}\label{eq:ModelProblemGenMatrix1}
     \begin{pmatrix}
    \theta_{n+1}\\ p_{n+1}
    \end{pmatrix}=
    {M}_{\kappa,\epsilon}
    \begin{pmatrix}
    \theta_{n}\\ p_{n}
    \end{pmatrix},\quad\quad {M}_{\kappa,\epsilon}=\begin{bmatrix}
    A_{\kappa,\epsilon} & B_{\kappa,\epsilon}\\
    C_{\kappa,\epsilon}& D_{\kappa,\epsilon}
    \end{bmatrix}.
\end{equation}
From \cref{eq:ModelProblemGenMatrix1}, it is clear that an integration leg of length $T=\epsilon L$ (with initial condition $\theta(0)=\theta_0,p(0)=p_0$) is given by
$$
     \begin{pmatrix}
    \theta_{L}\\ p_{L}
    \end{pmatrix}=
    {M}_{\kappa,\epsilon}^L
    \begin{pmatrix}
    \theta_{0}\\ p_{0}
    \end{pmatrix}.
$$

We now apply two restrictions to the integration matrix in \cref{eq:ModelProblemGenMatrix1}. Reversibility imposes that $A_{\kappa,\epsilon}=D_{\kappa,\epsilon}$; symplecticity (in one dimension equivalent to volume-preservation) implies that \citep {Blanes2014}
\begin{equation}\label{eq:ModelProblemSymplecticCondition}
    \det({M}_{\kappa,\epsilon})=A_{\kappa,\epsilon}^2-B_{\kappa,\epsilon}C_{\kappa,\epsilon}=1.
\end{equation}
The eigenvalues of the matrix ${M}_{\kappa,\epsilon}$ are then
$$
    \lambda=A_{\kappa,\epsilon}\pm\sqrt{A_{\kappa,\epsilon}^2-1},
$$
which shows that there are three cases:
\begin{enumerate}
    \item $\lvert A_{\kappa,\epsilon}\rvert>1$. For one of the eigenvalues, $\lvert\lambda\rvert>1$ and so the integration is unstable.
    \item $\lvert A_{\kappa,\epsilon}\rvert<1$. The integration is stable as both eigenvalues have magnitude $1$.
    \item $\lvert A_{\kappa,\epsilon}\rvert=1$. The symplectic condition \cref{eq:ModelProblemSymplecticCondition} necessarily implies $B_{\kappa,\epsilon}C_{\kappa,\epsilon}=0$, which gives two sub-cases:
        \begin{enumerate}
            \item $\lvert B_{\kappa,\epsilon}\rvert+\lvert C_{\kappa,\epsilon}\rvert=0$. The matrix ${M}_{\kappa,\epsilon}=\pm{I}$ and the integration is stable\label{item:ModelProblemStable2}.
            \item $\lvert B_{\kappa,\epsilon}\rvert+\lvert C_{\kappa,\epsilon}\rvert\neq0$. Then, if $B_{\kappa,\epsilon}\neq0$,
            $${M}^L_{\kappa,\epsilon}=\begin{bmatrix}
            A^L& LA^{L-1}B\\0&A^L
            \end{bmatrix}$$ and the integration is (weakly) unstable. Similarly there is weak instability if instead $C_{\kappa,\epsilon}\neq0$
        \end{enumerate}
\end{enumerate}

Thus for stable integration, one may find $\eta_{\kappa,\epsilon}$ such that $A_{\kappa,\epsilon}=\cos(\eta_{\kappa,\epsilon})\in[-1,1]$; in addition we define $\chi_{\kappa,\epsilon}=B_{\kappa,\epsilon}/\sin(\eta_{\kappa,\epsilon})$ for $\sin(\eta_{\kappa,\epsilon})\neq0$ and let
$\chi_{\kappa,\epsilon}$ be arbitrary if $\sin(\eta_{\kappa,\epsilon})=0$. In this way, for the model problem, all stable, symplectic integrations have a propagation matrix of the form
\begin{equation}\label{eq:ModelProblemGenMatrix3}
{M}_{\kappa,\epsilon}=\begin{bmatrix}
    \cos(\eta_{\kappa,\epsilon}) & \chi_{\kappa,\epsilon}\sin(\eta_{\kappa,\epsilon})\\
    -\chi_{\kappa,\epsilon}^{-1}\sin(\eta_{\kappa,\epsilon})& \cos(\eta_{\kappa,\epsilon})
    \end{bmatrix}.
\end{equation}

We now state a lemma analogue of Proposition 4.3 in \citep{Blanes2014}.
\begin{lemma}\label{thm:ModelProblemEnergyError}
Denote $A=\cos(L\eta_{\kappa,\epsilon}), B=\chi_{\kappa,\epsilon}\sin(L\eta_{\kappa,\epsilon}), C=-\chi^{-1}_{\kappa,\epsilon}\sin(L\eta_{\kappa,\epsilon})$. Given the initial conditions $\theta_0,p_0$, and integrating the dynamics of the Hamiltonian of the model problem \cref{eq:ModelProblem} using the integrator in \cref{eq:ModelProblemGenMatrix3} for $L$ steps to give new values of $\theta_L,p_L$, the energy error may be expressed as:
\begin{equation}\label{eq:ModelProblemEnergyError}
    \Delta\equiv H(\theta_L,p_L)-H(\theta_0,p_0)=\frac{1}{2}\left(C+(1+\kappa)B\right)\left(C\theta_0^2+2A\theta_0p_0+Bp_0^2\right).
\end{equation}
\end{lemma}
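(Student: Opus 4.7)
The plan is to exploit the fact that the one-step matrix in \eqref{eq:ModelProblemGenMatrix3} is similar to a standard rotation, compute $L$-th powers explicitly, and then use the symplecticity identity to collapse the resulting quadratic form in $(\theta_0,p_0)$ into a product of two linear factors.

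First I would diagonalize via conjugation: writing $S=\mathrm{diag}(\sqrt{|\chi_{\kappa,\epsilon}|},1/\sqrt{|\chi_{\kappa,\epsilon}|})$ (adjusting signs if $\chi_{\kappa,\epsilon}<0$), one checks that $S^{-1}{M}_{\kappa,\epsilon}S$ is the standard rotation through the angle $\eta_{\kappa,\epsilon}$. Therefore ${M}_{\kappa,\epsilon}^L=S\,R(L\eta_{\kappa,\epsilon})\,S^{-1}$, which has exactly the same structural form as ${M}_{\kappa,\epsilon}$ with $\eta_{\kappa,\epsilon}$ replaced by $L\eta_{\kappa,\epsilon}$. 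This identifies the three quantities defined in the statement as the entries of ${M}_{\kappa,\epsilon}^L$: $A=\cos(L\eta_{\kappa,\epsilon})$ on the diagonal, $B=\chi_{\kappa,\epsilon}\sin(L\eta_{\kappa,\epsilon})$ in the upper right, and $C=-\chi_{\kappa,\epsilon}^{-1}\sin(L\eta_{\kappa,\epsilon})$ in the lower left. In particular, symplecticity of the one-step matrix persists under powers, so $A^2-BC=1$.

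Next I would just substitute $\theta_L=A\theta_0+Bp_0$, $p_L=C\theta_0+Ap_0$ into $2H(\theta,p)=(1+\kappa)\theta^2+p^2$ and subtract $(1+\kappa)\theta_0^2+p_0^2$. Collecting like terms gives
\begin{equation*}
2\Delta=\bigl[(1+\kappa)(A^2-1)+C^2\bigr]\theta_0^2+2A\bigl[(1+\kappa)B+C\bigr]\theta_0p_0+\bigl[(1+\kappa)B^2+A^2-1\bigr]p_0^2.
\end{equation*}
The main (minor) obstacle is recognizing the common factor. Using $A^2-1=BC$, the coefficient of $\theta_0^2$ becomes $(1+\kappa)BC+C^2=C\bigl[(1+\kappa)B+C\bigr]$, and the coefficient of $p_0^2$ becomes $(1+\kappa)B^2+BC=B\bigl[(1+\kappa)B+C\bigr]$. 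Hence every term carries the factor $(1+\kappa)B+C$, and what remains is exactly the quadratic form $C\theta_0^2+2A\theta_0p_0+Bp_0^2$, yielding \eqref{eq:ModelProblemEnergyError}.

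The only real content is thus the symplectic identity $A^2-BC=1$ applied to $M_{\kappa,\epsilon}^L$ (which is immediate from the previous paragraph), together with the power computation; everything else is routine linear algebra. I expect no genuine difficulty beyond bookkeeping.
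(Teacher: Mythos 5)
Your proof is correct and follows essentially the same route as the paper: substitute $\theta_L=A\theta_0+Bp_0$, $p_L=C\theta_0+Ap_0$ into the energy and use $A^2-BC=1$ to extract the common factor $(1+\kappa)B+C$. The only difference is that you explicitly justify, via the similarity of $M_{\kappa,\epsilon}$ to a rotation, that $M_{\kappa,\epsilon}^L$ has the stated entries $A,B,C$ — a step the paper leaves implicit — which is a harmless (indeed welcome) addition rather than a different argument.
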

\begin{proof}
Applying the symplectic condition \cref{eq:ModelProblemSymplecticCondition}, the energy error $\Delta\equiv H(\theta_L,p_L)-H(\theta_0,p_0)$ then follows
\begin{align*}
    2\Delta &= p_L^2+(1+\kappa)\theta_L^2-p_0^2-(1+\kappa)\theta_0^2\\
    &=\left(C\theta_0+Ap_0\right)^2+(1+\kappa)\left(Bp_0+A\theta_0\right)^2-p_0^2-(1+\kappa)\theta_0^2\\
    &=(C^2+(A^2-1)(1+\kappa))\theta_0^2+2A(C+(1+\kappa)B)\theta_0p_0+(A^2-1+(1+\kappa)B^2)p_0^2\\
    &=\left(C+(1+\kappa)B\right)\left(C\theta_0^2+2A\theta_0p_0+Bp_0^2\right).
\end{align*}
\end{proof}
\begin{theorem}\label{cor:ModelProblemEnergyError}
With the notation of the lemma, assume that the initial conditions $\theta_0\sim\mathcal{N}(0,{1}/({1+\kappa}))$, $p_0\sim\mathcal{N}(0,1)$ are (independently) distributed according to their stationary distributions corresponding to the Hamiltonian for the model problem \cref{eq:ModelProblem}. Then the expected energy error follows
$$
    \mathbb{E}[\Delta]=\sin^2(L\eta_{\kappa,\epsilon})\rho(\epsilon,\kappa)\leq\rho(\epsilon,\kappa),
$$
where $\rho$ is given by
\begin{equation}\label{eq:ModelProblemRho}
    \rho(\epsilon,\kappa)=\frac{1}{2}\left(\sqrt{1+\kappa}\chi_{\kappa,\epsilon}-\frac{1}{\sqrt{1+\kappa}\chi_{\kappa,\epsilon}}\right)^2=\frac{\left(C_{\kappa,\epsilon}+(1+\kappa)B_{\kappa,\epsilon}\right)^2}{2(1+\kappa)(1-A_{\kappa,\epsilon}^2)}.
\end{equation}

\begin{proof}
Since $\mathbb{E}[\theta_0p_0]=\mathbb{E}[\theta_0]\mathbb{E}[p_0]=0$ and $\mathbb{E}[p_0^2]=1, \mathbb{E}[\theta_0^2]={1}/(1+\kappa)$, the expectation of \cref{eq:ModelProblemEnergyError} is
$$
    \mathbb{E}[\Delta]=\frac{1}{2}\left(C+(1+\kappa)B\right)\left(\frac{C}{1+\kappa}+B\right)=\frac{1}{2}\left(\frac{C}{\sqrt{1+\kappa}}+\sqrt{1+\kappa}B\right)^2.
$$
Substituting the expressions for $B,C$ from the definitions in the theorem above into the last display and dropping the subscripts to give $\chi=\chi_{\kappa,\epsilon}$, $s=\sin(L\eta_{\kappa,\epsilon})$ and $c=\cos(L\eta_{\kappa,\epsilon})$ gives
\begin{equation*}
    \mathbb{E}[\Delta]=\frac{1}{2}s^2\left(\frac{1}{(1+\kappa)\chi^2}+(1+\kappa)\chi^2-2\right)=\sin^2(L\eta_{\kappa,\epsilon})\rho(\epsilon,\kappa).
\end{equation*}
\end{proof}
\end{theorem}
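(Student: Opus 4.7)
The plan is to start directly from the explicit formula for $\Delta$ given in \cref{eq:ModelProblemEnergyError} of the preceding lemma and take expectations term by term. Writing out
$\Delta = \tfrac{1}{2}(C+(1+\kappa)B)(C\theta_0^2 + 2A\theta_0 p_0 + Bp_0^2)$,
I would use independence of $\theta_0$ and $p_0$ to conclude $\mathbb{E}[\theta_0 p_0]=0$, so the cross term drops out, and then substitute the stationary second moments $\mathbb{E}[\theta_0^2]=1/(1+\kappa)$ and $\mathbb{E}[p_0^2]=1$. This reduces the expectation to $\tfrac{1}{2}(C+(1+\kappa)B)\bigl(C/(1+\kappa)+B\bigr)$, which I would then recognize as the perfect square $\tfrac{1}{2}\bigl(C/\sqrt{1+\kappa}+\sqrt{1+\kappa}\,B\bigr)^{2}$.

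Next I would insert the parametrization of a stable symplectic step. From \cref{eq:ModelProblemGenMatrix3} and the fact that ${M}_{\kappa,\epsilon}^{L}$ has the same structural form with angle $L\eta_{\kappa,\epsilon}$, the entries appearing in the lemma become $A=\cos(L\eta_{\kappa,\epsilon})$, $B=\chi_{\kappa,\epsilon}\sin(L\eta_{\kappa,\epsilon})$, $C=-\chi_{\kappa,\epsilon}^{-1}\sin(L\eta_{\kappa,\epsilon})$. Substituting these factors $\sin(L\eta_{\kappa,\epsilon})$ out of $B$ and $C$ produces the common factor $\sin^{2}(L\eta_{\kappa,\epsilon})$, leaving a $\kappa$- and $\epsilon$-dependent quantity
\[
\rho(\epsilon,\kappa)=\tfrac{1}{2}\bigl(\sqrt{1+\kappa}\,\chi_{\kappa,\epsilon}-1/(\sqrt{1+\kappa}\,\chi_{\kappa,\epsilon})\bigr)^{2},
\]
which establishes the first expression in \cref{eq:ModelProblemRho}. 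The bound $\mathbb{E}[\Delta]\leq \rho(\epsilon,\kappa)$ is then immediate since $\sin^{2}(L\eta_{\kappa,\epsilon})\leq 1$.

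To obtain the second equivalent form of $\rho$ I would go back to the unparametrized entries $A_{\kappa,\epsilon}, B_{\kappa,\epsilon}, C_{\kappa,\epsilon}$ and invoke the symplecticity relation \cref{eq:ModelProblemSymplecticCondition}, which in the form $1-A_{\kappa,\epsilon}^{2}=-B_{\kappa,\epsilon}C_{\kappa,\epsilon}$ converts $\sin^{2}(L\eta_{\kappa,\epsilon})=1-\cos^{2}(\eta_{\kappa,\epsilon})\cdot(\ldots)$—more directly, it lets me express the factor $(1+\kappa)\chi_{\kappa,\epsilon}^{2}-2+1/((1+\kappa)\chi_{\kappa,\epsilon}^{2})$ as $(C_{\kappa,\epsilon}+(1+\kappa)B_{\kappa,\epsilon})^{2}/\bigl((1+\kappa)(1-A_{\kappa,\epsilon}^{2})\bigr)$ after writing $\chi_{\kappa,\epsilon}^{2}=-B_{\kappa,\epsilon}/C_{\kappa,\epsilon}$. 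This provides the second, basis-free expression for $\rho$ in \cref{eq:ModelProblemRho} and completes the proof.

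There is no real obstacle here: the whole argument is essentially a one-page algebraic computation exploiting independence, stationarity, the perfect-square structure of $\mathbb{E}[\Delta]$, and the symplectic constraint to rewrite the final answer. The only thing to be careful with is the degenerate case $\sin(\eta_{\kappa,\epsilon})=0$ (where $\chi_{\kappa,\epsilon}$ was defined to be arbitrary); in that case ${M}_{\kappa,\epsilon}^{L}=\pm I$, both sides of the claimed identity vanish, and the statement holds trivially.
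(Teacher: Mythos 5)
Your proof is correct and follows essentially the same route as the paper: take the expectation of the lemma's formula using independence and the stationary second moments, recognize the perfect square, and factor out $\sin^2(L\eta_{\kappa,\epsilon})$ after substituting $B=\chi_{\kappa,\epsilon}\sin(L\eta_{\kappa,\epsilon})$, $C=-\chi_{\kappa,\epsilon}^{-1}\sin(L\eta_{\kappa,\epsilon})$. Your explicit verification of the second expression for $\rho$ via $\chi_{\kappa,\epsilon}^2=-B_{\kappa,\epsilon}/C_{\kappa,\epsilon}$ and $1-A_{\kappa,\epsilon}^2=-B_{\kappa,\epsilon}C_{\kappa,\epsilon}$, and your remark on the degenerate case $\sin(\eta_{\kappa,\epsilon})=0$, are minor additions the paper leaves implicit.
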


Since $\eta_{\kappa,\epsilon}$ and $\chi_{\kappa,\epsilon}$ depend  on the integrator \cref{eq:ModelProblemGenMatrix3}, the  $\rho$ function also depends on the integrator. Note that $\rho$ does not change with $L$. For the model problem, integrators with smaller $\rho$ lead to smaller averaged energy errors at stationarity of the chain and therefore to smaller empirical rejection rates. By diagonalization it is easily shown as in \citep{Blanes2014} that the same is true for \emph{all Gaussian targets} $\propto \exp(-U(\theta))$, $U = U_0+U_1$. This suggests that,
all other things being equal, integrators with smaller $\rho$ should be preferred (see a full discussion in\ \citep{Calvo2021}).

\subsection{KRK vs. RKR}

For the KRK integration, we find (similarly to Section~\ref{sec:shortcomings}) that a
stable integration  requires $-1<\cos(\epsilon)-\epsilon\kappa\sin(\epsilon)/2<1$, so that, the stability limit for $\kappa>0$ is
\begin{equation}\label{eq:SpecificSchemesRKRStab}
    \epsilon<\frac{2\cot(\epsilon/2)}{\kappa}.
\end{equation}
Note that $\epsilon<\pi$ for any  value of $\kappa>0$. For $-1<\kappa<0$, the stability limit is $\epsilon<\pi$.
Application of the formula \cref{eq:ModelProblemRho} gives the $\rho$ function of the integrator  as
$$
    \rho^{[KRK]}(\epsilon,\kappa)=\frac{\kappa^2\csc(\epsilon)(-4 \epsilon\cos(\epsilon)+(4+\kappa \epsilon^2)\sin(\epsilon))^2}
    {8(1+\kappa)(4\kappa \epsilon\cos(\epsilon)+(4-\kappa^2\epsilon^2)\sin(\epsilon))}.
$$
For $\kappa=0$, $\rho$ vanishes as  expected because then the integration is exact.

Similarly, for RKR,
stable integration  requires $-1<\cos(\epsilon)-\epsilon\kappa\sin(\epsilon)/2<1$, so that, the stability limit for $\kappa>0$ of RKR  is the same we found in \eqref{eq:SpecificSchemesRKRStab}.
Again for $-1<\kappa<0$, the stability limit is $\epsilon<\pi$, as for KRK.

Application of the formula \cref{eq:ModelProblemRho} gives the $\rho$ function of the RKR integrator as
$$
    \rho^{[RKR]}(\epsilon,\kappa)=\frac{\kappa^2\csc(\epsilon)(\kappa \epsilon\cos(\epsilon)+2\sin(\epsilon)-(2+\kappa)\epsilon)^2}
    {2(1+\kappa)(4\kappa \epsilon\cos(\epsilon)+(4-\kappa^2\epsilon^2)\sin(\epsilon))}.
$$

The following result implies  that for all Gaussian problems, at stationarity, RKR always leads to smaller energy errors/higher acceptance rates than KRK.
\begin{theorem}For each choice of $\kappa>-1$, $\kappa\neq 0$, and $\epsilon>0$ leading to a stable KRK or RKR integration
$$
\rho^{[RKR]}(\epsilon,\kappa) < \rho^{[KRK]}(\epsilon,\kappa).
$$
\end{theorem}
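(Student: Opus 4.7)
My plan is to reduce the stated strict inequality to a polynomial-trigonometric inequality in $\epsilon$ and $\kappa$ and then exhibit two fortunate factorizations that make both factors manifestly positive on the stability range.

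I would first observe that $\rho^{[KRK]}$ and $\rho^{[RKR]}$ share a common (strictly positive) denominator. Indeed, by cyclicity of trace, $\mathrm{tr}\bigl(K(\epsilon/2)R(\epsilon)K(\epsilon/2)\bigr)=\mathrm{tr}\bigl(R(\epsilon/2)K(\epsilon)R(\epsilon/2)\bigr)$, so the diagonal entry $A_{\kappa,\epsilon}$ (which by reversibility equals half the trace of the propagator) coincides for the two integrators, and hence so does the factor $1-A^2$ appearing in $\rho$. Writing $N_K=(4+\kappa\epsilon^2)\sin\epsilon-4\epsilon\cos\epsilon$ and $N_R=\kappa\epsilon\cos\epsilon+2\sin\epsilon-(2+\kappa)\epsilon$, the explicit formulas for $\rho^{[KRK]}$ and $\rho^{[RKR]}$ displayed just before the theorem yield, after a direct subtraction,
$$
\rho^{[KRK]}-\rho^{[RKR]} \;=\; \frac{\kappa^2\csc\epsilon\,(N_K-2N_R)(N_K+2N_R)}{8(1+\kappa)\bigl[(4-\kappa^2\epsilon^2)\sin\epsilon+4\kappa\epsilon\cos\epsilon\bigr]}.
$$
The denominator is a positive multiple of $(1+\kappa)(1-A^2)$, strictly positive by stability and $\kappa>-1$, and $\kappa^2>0$ since $\kappa\neq 0$. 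It therefore suffices to prove that both $N_K-2N_R$ and $N_K+2N_R$ are strictly positive throughout the stability range, which is contained in $(0,\pi)$.

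For $N_K-2N_R$, the half-angle identities $1-\cos\epsilon=2\sin^2(\epsilon/2)$ and $\sin\epsilon=2\sin(\epsilon/2)\cos(\epsilon/2)$ give
$$
N_K-2N_R \;=\; 2\epsilon\sin(\epsilon/2)\bigl[2(2+\kappa)\sin(\epsilon/2)+\kappa\epsilon\cos(\epsilon/2)\bigr].
$$
When $\kappa\geq 0$ both summands in the bracket are non-negative on $(0,\pi)$ and the first is strictly positive; when $-1<\kappa<0$, dividing the bracket by $\cos(\epsilon/2)>0$ and using the elementary inequality $\tan(\epsilon/2)>\epsilon/2$ yields the lower bound $2(2+\kappa)\tan(\epsilon/2)+\kappa\epsilon>(2+\kappa)\epsilon+\kappa\epsilon=2(1+\kappa)\epsilon>0$.

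The decisive step, which I expect to be the main obstacle, is to factorize $N_K+2N_R$ so as to tie it directly to the stability condition, because at first glance this quantity is a sum of three unrelated trigonometric terms with no obvious sign. Regrouping again via the half-angle identities leads however to the clean identity
$$
N_K+2N_R \;=\; 2\bigl[2\sin(\epsilon/2)-\epsilon\cos(\epsilon/2)\bigr]\bigl[4\cos(\epsilon/2)-\kappa\epsilon\sin(\epsilon/2)\bigr].
$$
The first bracket is positive on $(0,\pi)$ once more by $\tan(\epsilon/2)>\epsilon/2$. The second bracket vanishes exactly on the locus $\kappa\epsilon\tan(\epsilon/2)=4$; however, the stability limit \eqref{eq:SpecificSchemesRKRStab} forces $\kappa\epsilon\tan(\epsilon/2)<2$ when $\kappa>0$, and for $-1<\kappa\leq 0$ the bracket is bounded below by $4\cos(\epsilon/2)>0$ on $(0,\pi)$. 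In both regimes $N_K+2N_R>0$, so $(N_K-2N_R)(N_K+2N_R)>0$ and hence $\rho^{[KRK]}>\rho^{[RKR]}$ as claimed.
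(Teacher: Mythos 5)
Your proof is correct and follows essentially the same route as the paper's: both reduce the claim to the positivity of $N_K-2N_R$ and $N_K+2N_R$ over the common positive denominator, and both establish that positivity via half-angle identities, the elementary bound $\tan(\epsilon/2)>\epsilon/2$, and the stability condition $\kappa\epsilon<2\cot(\epsilon/2)$. Your explicit factorizations (and the trace argument showing the two schemes share the same $A_{\kappa,\epsilon}$) are a slightly tidier packaging of the paper's chain of equivalent inequalities, but the substance is identical.
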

\begin{proof}
From the expressions for $\rho$ given above, we have to show that
$$
4(\kappa \epsilon\cos(\epsilon)+2\sin(\epsilon)-(2+\kappa)\epsilon)^2 <
(-4 \epsilon\cos(\epsilon)+(4+\kappa \epsilon^2)\sin(\epsilon))^2.
$$
It is therefore sufficient to show that
\begin{equation}\label{eq:first}
2(\kappa \epsilon\cos(\epsilon)+2\sin(\epsilon)-(2+\kappa)\epsilon) <
-4 \epsilon\cos(\epsilon)+(4+\kappa \epsilon^2)\sin(\epsilon)
\end{equation}
and
\begin{equation}\label{eq:second}
4 \epsilon\cos(\epsilon)-(4+\kappa \epsilon^2)\sin(\epsilon) < 2(\kappa \epsilon\cos(\epsilon)+2\sin(\epsilon)-(2+\kappa)\epsilon).
\end{equation}
The inequality \eqref{eq:first} may be rearranged as
$$
2\kappa \epsilon \big(\cos(\epsilon)-1\big) < 4\epsilon \big(1-\cos(\epsilon)\big)+\kappa\epsilon^2\sin(\epsilon),
$$
or
$$
-4\kappa \epsilon \sin^2(\epsilon/2) < 8 \epsilon \sin^2(\epsilon/2)+2\kappa \epsilon^2 \sin(\epsilon/2)\cos(\epsilon/2).
$$
Since for stable runs $\epsilon<\pi$, so that $\sin^2(\epsilon/2)>0$, the last display is equivalent to
$$
-4\kappa < 8 +2\kappa \epsilon \cot(\epsilon/2)
$$
or
$$
-4 <\kappa \big(\epsilon \cot(\epsilon/2)+2\big).
$$
For $0<\epsilon <\pi$, $\epsilon \cot(\epsilon/2)+2$ takes values between 4 and 2 and therefore the last inequality certainly holds for each $\kappa>-1$.

The inequality \eqref{eq:second} may be similarly rearranged as
$$
\kappa \epsilon \big(2-\epsilon \cot(\epsilon/2)\big) < 4 \big(2-\epsilon \cot(\epsilon/2)\big)\cot(\epsilon/2)\big).
$$
Since, for $0<\epsilon <\pi$, $2-\epsilon \cot(\epsilon/2)>0$, we conclude that
\eqref{eq:second} is equivalent to
$$\kappa\epsilon < 4 \cot(\epsilon/2),
$$
a relation that, according to \eqref{eq:SpecificSchemesRKRStab}, holds for stable integrations with $\kappa>0$ and is trivially satisfied for $\kappa<0$, $\epsilon\in(0,\pi)$.
\end{proof}

\subsection{Multistage splittings}
In addition to the Strang formula \eqref{eq:VerletStrang} one may consider more sophisticated schemes
\begin{equation}\label{eq:multistage}
\psi_\epsilon =
\varphi^{[H_1]}_{a_1\epsilon}\circ
\varphi^{[H_2]}_{b_1\epsilon}\circ
\varphi^{[H_1]}_{a_2\epsilon}\circ
\cdots
\varphi^{[H_1]}_{a_{m-1}\epsilon}\circ
\varphi^{[H_2]}_{b_{m-1}\epsilon}\circ
\varphi^{[H_1]}_{a_m\epsilon}
\end{equation}
where $\sum a_j = \sum b_j = 1$. These integrators are always symplectic and in addition are time reversible if they are palindromic i.e. $a_i = a_{m-i+1}$, $i = 1,\dots, m$, $b_i = b_{m-i}$, $i = 1,\dots, m-1$. In the case of the kinetic/potential splitting of $H$,  integrators of the form \eqref{eq:multistage}  when used for HMC sampling may provide very large improvements on leapfrog/Verlet (see \citep{Calvo2021,Blanes2021} and their references).
For the preconditioned $H_0+U_1$ splitting in this paper, we have investigated extensively the existence of formulas of the format \eqref{eq:multistage} that improve on the RKR integrator based on the Strang recipe \eqref{eq:VerletStrang}. We proceeded in a way parallel to that followed in \citep{Blanes2014}. For fixed $m$,   $m=3$ or $m=4$, and a suitable range of values of $\kappa$ and $\epsilon$, we choose the values of $a_i$ and $b_i$ so as to minimize the function $\rho$ in Theorem \ref{cor:ModelProblemEnergyError}, thus minimizing the expected energy error at stationarity in the integration of the model problem. The outcome of our investigation was that, while we succeeded in finding formulas that improve on the Preconditioned RKR integrator, the improvements were minor and did not warrant the replacement of RKR by more sophisticated formulas.

\end{appendices}
\bibliographystyle{unsrt}

\end{document}